\newcommand{\dist}{\operatorname{dist}}
\newcommand{\ie}{i.\,e.\xspace}
\newcommand{\eg}{e.\,g.\xspace}
\newcommand{\etal}{et al.\xspace}
\newcommand{\wrt}{w.\,r.\,t.\xspace}
\newcommand{\Pro}[1]{\mathbf{Pr} \left[\,#1\,\right]}
\newcommand{\pointset}{\ensuremath{\mathrm{P}}\xspace}
\newcommand{\neighborset}{\ensuremath{\mathrm{N}}}
\newcommand{\candidateset}{\ensuremath{\mathrm{Candidates}}}
\newcommand{\cellset}{\ensuremath{\mathrm{Cells}}}
\newcommand{\overhangset}{\ensuremath{\mathrm{Overhang}}\xspace}
\newcommand{\ringset}{\ensuremath{\varsigma}\xspace}
\newcommand{\old}[1]{}
\DeclareMathOperator*{\argmin}{arg\,min}
\begin{document}

\title{Querying Probabilistic Neighborhoods \\ in Spatial Data Sets Efficiently}
\institute{\{moritz.looz-corswarem, meyerhenke\}@kit.edu\\Institute of Theoretical Informatics, Karlsruhe Institute of Technology (KIT), Germany}
\author{Moritz von Looz \and Henning Meyerhenke}

%\subjclass{F.2.2 Nonnumerical Algorithms and Problems, E.1 Data Structures, H.2.8 Database Applications, G.3}% mandatory: Please choose ACM 1998 classifications from http://www.acm.org/about/class/ccs98-html . E.g., cite as "F.1.1 Models of Computation". 
%\keywords{spatial range query, probabilistic neighborhood, polar quadtree, hyperbolic geometry, complex network generator}% mandatory: Please provide 1-5 keywords

\maketitle

\begin{abstract}
The probability that two spatial objects establish some kind of mutual connection often depends on their proximity.
To formalize this concept, we define the notion of a \emph{probabilistic neighborhood}:
Let $P$ be a set of $n$ points in $\mathbb{R}^d$,  $q \in \mathbb{R}^d$ a query point, $\dist$ a distance metric, and $f : \mathbb{R}^+ \rightarrow [0,1]$ a monotonically decreasing function.
Then, the probabilistic neighborhood $N(q, f)$ of $q$ with respect to $f$ is 
a random subset of $P$ and each point $p \in P$ belongs to $N(q,f)$ with probability $f(\dist(p,q))$.
Possible applications include query sampling and the simulation of probabilistic spreading phenomena, as well as other scenarios where the probability of a connection between two entities decreases with their distance.
We present a fast, sublinear-time query algorithm to sample probabilistic neighborhoods from planar point sets.
For certain distributions of planar $P$, we prove that our algorithm answers a query in $O((|N(q,f)| + \sqrt{n})\log n)$ time with high probability.
In experiments this yields a speedup over pairwise distance probing of at least one order of magnitude, even for rather small data sets with $n=10^5$ and also for other point distributions not covered by the theoretical results.
%This matches up to a logarithmic factor the cost induced by quadtree-based algorithms for deterministic queries and is asymptotically faster than the straightforward approach whenever $|N(q,f)| \in o(n / \log n)$.

\end{abstract}

\section{Introduction}
\label{sec:introduction}
%
% I removed the argument why spatial data structures are useful. Computational Geometers know that already.

%%%%%%%
% switch to our scenario
%%%%%%%
In many scenarios, connections between spatial objects are not certain but probabilistic, with the probability depending on the distance between them:
The probability that a customer shops at a certain physical store shrinks with increasing distance to it.
In disease simulations, if the social interaction graph is unknown but locations are available, disease transmission 
can be modeled as a random process with infection risk decreasing with distance.
Moreover, the wireless connections between units in an ad-hoc network are fragile and collapse more frequently with higher 
distance. 

For these and similar scenarios, we define the notion of a \emph{probabilistic neighborhood} in spatial data sets:
Let a set $P$ of $n$ points in $\mathbb{R}^d$, a query point $q \in \mathbb{R}^d$, a distance metric $\dist$, 
and a monotonically decreasing function $f : \mathbb{R}^+ \rightarrow [0,1]$ be given.
Then, the probabilistic neighborhood $N(q, f)$ of $q$ with respect to $f$ is 
a random subset of $P$ and each point $p \in P$ belongs to $N(q,f)$ with probability $f(\dist(p,q))$.
A straightforward query algorithm for sampling a probabilistic neighborhood
would iterate over each point $p \in P$ and sample for each whether it is included in $\neighborset(q, f)$.
This has a running time of $\Theta(n \cdot d)$ per query point, which is prohibitive for repeated queries in large data sets.
Thus we are interested in a faster algorithm for such a \emph{probabilistic neighborhood query} (PNQ, spoken as ``pink'').
We restrict ourselves to the planar case in this work, but the algorithmic principle is generalizable to higher dimensions.

While the linear-time approach has appeared before in the literature for a particular application~\cite{Aldecoa2015} %might be shortened
(without formulating the problem as a PNQ explicitly), we are not aware of previous work performing more efficient PNQs 
with an index structure. For example, the probabilistic quadtree introduced by Kraetzschmar \etal~\cite{kraetzschmar2004probabilistic} 
is designed to store probabilistic occupancy data and gives deterministic results.
Other range queries related to (yet different from) our work as well as deterministic index structures are described in Section~\ref{sec:related}.

\paragraph{Contributions.}
We develop, analyze, implement, and evaluate an index structure and a query algorithm that 
together provide fast probabilistic neighborhood queries in the Euclidean and hyperbolic plane.
Our key data structure for these fast PNQs is a polar quadtree which we adapt from our previous work~\cite{Looz2015HRG}.
Preprocessing for quadtree construction requires $O(n \log n)$ time with high probability\footnote{We say ``with high probability'' (whp) when 
referring to a probability $\geq 1- 1/n$ for sufficiently large $n$.} (whp).

To answer PNQs, we first present a simple query algorithm (Section~\ref{sec:baseline}).
We then improve its time complexity by treating whole subtrees as so-called virtual leaves, see Section~\ref{sec:subtree-aggr}.
As shown by our detailed theoretical analysis, the improved algorithm yields a query
time complexity of $O((|\neighborset(q, f)| + \sqrt{n})\log n)$ whp to find a probabilistic neighborhood $\neighborset(q,f)$ among $n$ points, for $n$ sufficiently large. This is sublinear if the returned neighborhood $\neighborset(q, f)$ is of size $o(n / \log n)$
-- an assumption we consider reasonable for most applications.
For our theoretical results to hold, the quadtree structure needs to be able to partition the distribution of the point positions in $\pointset$, \ie not all of the probability mass may be concentrated on a single point or line.
In our case of polar quadtrees, this is achieved if the distribution is continuous, integrable, rotationally invariant 
with respect to the origin and non-zero only for a finite area.

Experimental results are shown in Section~\ref{sec:applications}: We apply our query algorithm to generate random graphs
in the hyperbolic plane~\cite{Krioukov2010} in subquadratic time. Graphs with millions of edges can now be generated 
within a few minutes sequentially.
This yields an acceleration of at least one order of magnitude in practice compared to a reference implementation~\cite{Aldecoa2015} that uses linear-time queries. Compared to our previous work on graph generation~\cite{Looz2015HRG}, 
our new algorithm is able to generate a more extensive model.
Even if the distribution of a given point set $P$ is unknown in practice, running times are fast:
As an example of probabilistic spreading behavior, we simulate a simple disease spreading mechanism on real population density geodata.
In this scenario, our fast PNQs are at least two orders of magnitude faster than linear-time queries.

%%%%%%%%%%%%%%%%%%%%%%%%%%%%%%%%%%%%%%
\section{Preliminaries}
\label{sec:prelim}

%%%%%%%%%%%%%%%%%
\subsection{Notation}
\label{sub:notation}
\newcommand{\hyperbolic}{\ensuremath{\mathbb{H}}}
\newcommand{\Euclidean}{\ensuremath{\mathbb{E}}}
Let the input be given as set $\pointset$ of $n$ points. 
The points in $\pointset$ are distributed in a disk $\mathbb{D}_R$ of radius $R$ in the hyperbolic or Euclidean plane, the distribution is given by a probability density function $j(\phi, r)$ for an angle $\phi$ and a radius $r$.
Recall that, for our theoretical results to hold, we require $j$ to be known, continuous and integrable.
Furthermore, $j$ needs to be rotationally invariant -- meaning that $j(\phi_1, r) = j(\phi_2, r)$ for any radius $r$ and any two angles $\phi_1$ and $\phi_2$ -- and positive within  $\mathbb{D}_R$,
so that $j(r) > 0 \Leftrightarrow r < R$.
Due to the rotational invariance, $j(\phi, r)$ is the same for every $\phi$ and we can write $j(r)$.
Likewise, we define $J(r)$ as the indefinite integral of $j(r)$ and normalize it so that $J(R) = 1$ (also implying $J(0) = 0$). The value $J(r)$ then gives the fraction of probability mass inside radius $r$.

For the distance between two points $p_1$ and $p_2$, we use $\dist_\hyperbolic{}(p_1, p_2)$ for the hyperbolic and $\dist_\Euclidean{}(p_1, p_2)$ for the Euclidean case.
We may omit the index if a distinction is unnecessary.
As mentioned, a point $p$ is in the probabilistic neighborhood of query point $q$ with probability $f(\dist(p, q))$.
Thus, a \emph{query pair} consists of a query point $q$ and a function $f : \mathbb{R}^+ \rightarrow [0,1]$ that maps distances to probabilities.
The function $f$ needs to be monotonically decreasing but may be discontinuous. Note that $f$ can be 
defined differently for each query.
The query result, the probabilistic neighborhood of $q$ \wrt $f$, is denoted by the set $\neighborset(q,f) \subseteq P$.

For the algorithm analysis, we use two additional sets for each query $(q,f)$:
\begin{itemize}
 \item $\candidateset(q,f)$: neighbor candidates examined when executing such a query,
 \item $\cellset(q, f)$: quadtree cells examined during execution of the query.
\end{itemize}
Note that the sets $\neighborset(q,f), \candidateset(q,f)$ and $\cellset(q,f)$ are probabilistic, thus theoretical results about their size are usually only with high probability.

%%%%%%%%%%%%%%%%%%%%%
\subsection{Related Work}
\label{sec:related}
\paragraph{Fast deterministic range queries.}
Numerous index structures for fast range queries on spatial data exist.
Many such index structures are based on trees or variations 
thereof, see Samet's book~\cite{Samet:2005:FMM:1076819} for a comprehensive overview.
I/O efficient worst case analysis is usually performed using the EM model,
see \eg~\cite{Arge:2012:ISD:2367574.2367575}. In more applied settings, average-case performance is of
higher importance, which popularized R-trees or newer variants thereof, \eg~\cite{Kamel:1994:HRI:645920.673001}.
Concerning (balanced) quadtrees for spatial dimension $d$, it is known that queries require $O(d \cdot n^{1-1/d})$ time
(thus $O(\sqrt{n})$ in the planar case)~\cite[Ch.~1.4]{Samet:2005:FMM:1076819}.
Regarding PNQs our algorithm matches this query complexity up to a logarithmic factor.
Yet note that, since for general $f$ and $\dist$ in our scenario all points in the set $P$ could be neighbors, 
data structures for deterministic queries cannot solve a PNQ efficiently without adaptations.

Hu et al.~\cite{Hu2014independent} give a query sampling algorithm for one-dimensional data that,
given a set $P$ of n points in $\mathbb{R}$,  an interval $q = [x,y]$ and an integer, $t \geq 1$, returns $t$ elements uniformly sampled from $P \cap q$.
They describe a structure of $O(n)$ space that answers a query in $O(\log n + t)$ time and supports updates in $O(\log n)$ time.
While also offering query sampling, PNQs differ from the problem considered by Hu et al. in two aspects: We consider two dimensions instead of one and our sampling probabilities are not necessarily uniform,
but can be set by the user by a distance-dependent function.

\paragraph{Range queries on uncertain data.}
During the previous decade probabilistic queries \emph{different} from PNQs have become popular.
The main scenarios can be put into two categories~\cite{pei2008query}: (i) Probabilistic databases contain entries
that come with a specified confidence (\eg sensor data whose accuracy is uncertain) and
(ii) objects with an uncertain location, \ie the location is specified by a probability distribution.
Both scenarios differ under typical and reasonable assumptions from ours: 
Queries for uncertain data are usually formulated to return \emph{all} points in the neighborhood
whose confidence/probability exceeds a certain threshold~\cite{kriegel2007probabilistic},
or computing points that are possibly nearest neighbors~\cite{agarwal2013nearest}.

In our model, in turn, the choice of inclusion of a point $p$ is a random choice for every different $p$. In particular, 
depending on the probability distribution, \emph{all}
nodes in the plane can have positive probability to be part of some other's neighborhood.
In the related scenarios this would only be true with extremely small confidence values or extremely
large query circles.

\paragraph{Applications in fast graph generation.}
One application for PNQs as introduced in Section~\ref{sec:introduction} is the % (unrestricted)
hyperbolic random graph model by Krioukov \etal~\cite{Krioukov2010}. The $n$ graph nodes are represented by points 
thrown into the hyperbolic plane at random\footnote{The probability density in the polar model depends only on radii $r$ and $R$ as well as a growth parameter $\alpha$ and is given by $g(r) := \alpha\frac{\sinh(\alpha r)}{\cosh(\alpha R)-1} $.} and
two nodes are connected by an edge with a probability that decreases with
the distance between them. An implementation of this generative model is
available~\cite{Aldecoa2015}, it performs $\Theta(n^2)$ neighborhood tests. 
Bringmann et al. provide an algorithm to generate hyperbolic random graphs in expected linear time~\cite{bringmann2015geometric}; to our knowledge no implementation of it exists yet.

In previous work we designed a  
generator~\cite{Looz2015HRG} faster than~\cite{Aldecoa2015} for a restricted model; it runs in $O((n^{3/2}+m) \log n)$
time whp for the whole graph with $m$ edges. The range queries discussed there are facilitated by a quadtree 
which supports only deterministic queries. Consequently, the queries result in unit-disk graphs in the hyperbolic plane 
and can be considered as a special case of the current work (a step function $f$ with values 0 and 1 results in a deterministic query).

Our major technical inspiration for enhancing the quadtree for probabilistic neighborhoods is the work of 
Batagelj and Brandes~\cite{batagelj2005efficient}. They were the first to present a random sampling 
method to generate Erd\H{o}s-R\'{e}nyi-graphs with $n$ nodes and $m$ edges in $O(n + m)$ time complexity.
Faced with a similar problem of selecting each of $n$ elements with a constant probability $p$, they designed an 
efficient algorithm (see Algorithm~\ref{algo:fast-batagelji-brandes} in Appendix~\ref{sec:fast-batagelji-brandes}).
Instead of sampling each element separately, they use random jumps of length $\delta(p)$,
$\delta(p) = \ln(1-\mathit{rand}) / \ln(1-p)$, with $\mathit{rand}$ being a random number uniformly distributed in $[0,1)$.

\subsection{Quadtree Specifics}
\label{sub:prelim-quadtree}
Our key data structure is a polar region quadtree in the Euclidean or hyperbolic plane.
While they are less suited to higher dimensions as for example k-d-trees, the complexity is comparable in the plane.
For the (circular) range queries we discuss, quadtrees have the significant advantage of a bounded aspect ratio:
A cell in a k-d-tree might extend arbitrarily far in one direction, rendering theoretical guarantees about the area affected by the query circle difficult to impossible.
In contrast, the region covered by a quadtree cell is determined by its position and level.

We mostly reuse our previous definition~\cite{Looz2015HRG} of the quadtree:
A node in the quadtree is defined as a tuple $(\mathrm{min}_\phi, \mathrm{max}_\phi, \mathrm{min}_r, \mathrm{max}_r)$
with \(\mathrm{min}_\phi \leq \mathrm{max}_\phi\) and \(\mathrm{min}_r \leq \mathrm{max}_r\).
It is responsible for a point $p = (\phi_p, r_p)$ exactly if
$(\mathrm{min}_\phi \leq \phi_p < \mathrm{max}_\phi)$ and $(\mathrm{min}_r \leq r_p < \mathrm{max}_r)$.
We call the region represented by a particular quadtree node its quadtree \emph{cell}.
The quadtree is parametrized by its radius $R$, the $\mathrm{max}_r$ of the root cell.
If the probability distribution $j$ is known (which we assume for our theoretical results), 
we set the radius $R$ to $\argmin_r J(r) = 1$, \ie to the minimum radius that contains the full probability mass.
If only the points are known, the radius is set to include all of them.
While in this latter case the complexity analysis of Section~\ref{sec:baseline} and~\ref{sec:subtree-aggr} does not hold,
fast running times in practice can still be achieved (see Section~\ref{sec:applications}).

%%%%%%%%%%%%%%%%%%%%%%%%%%%%%%%%%%%%%%%%
\section{Baseline Query Algorithm}
\label{sec:baseline}
We begin the main technical part by describing adaptations in the quadtree construction as well as
a baseline query algorithm. This latter algorithm introduces the main idea, but is asymptotically not faster than the 
straightforward approach. In Section~\ref{sec:subtree-aggr} it is then refined to support faster queries.

\subsection{Quadtree Construction}
\label{sub:qt-augment}
\begin{wrapfigure}[18]{R}{0.4\linewidth}
 \centering
\vspace{-3\baselineskip}
 \includegraphics[width=0.9\linewidth]{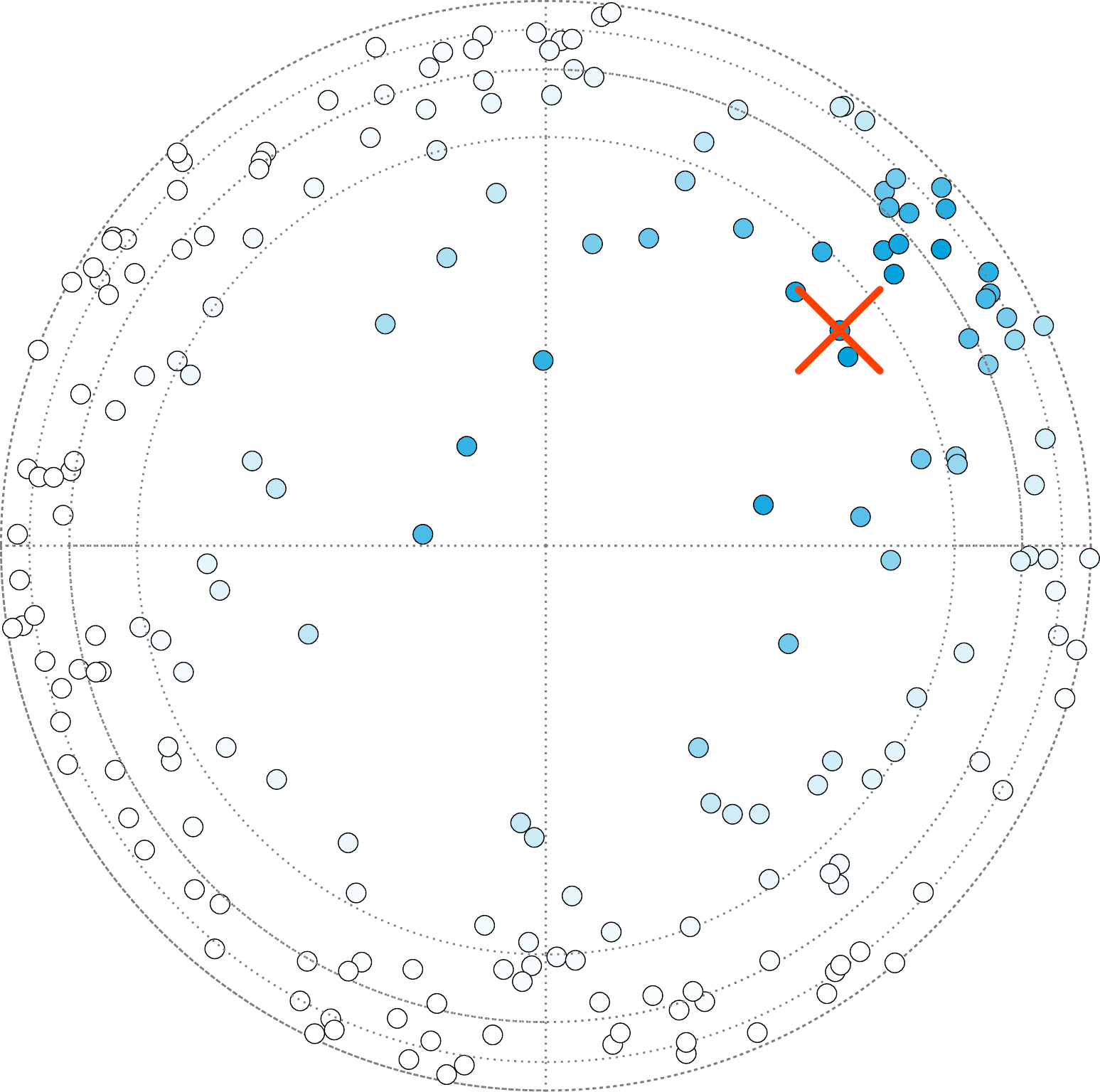}
 \caption{Query over 200 points in a polar hyperbolic quadtree, with $f(d) := 1/(e^{(d-7.78)}+1)$ and the query point $q$ marked by a red cross.
Points are colored according to the probability that they are included in the result. Blue represents a high probability, white a probability of zero.}
\label{fig:visualization-point-probabilities}
\end{wrapfigure}

At each quadtree node $v$, we store the size of the subtree rooted there.
%The quadtree adaptations are actually simple. We first augment each quadtree node $v$ with the size of the subtree 
%rooted at $v$.
We then generalize the rule for node splitting to handle point distributions $j$ as defined
in Section~\ref{sub:notation}:
As is usual for quadtrees, a leaf cell $c$ is split into four children when it exceeds its fixed capacity.
Since our quadtree is polar, this split happens once in the angular and once in the radial direction.
Due to the rotational symmetry of $j$, splitting in the angular direction is straightforward as the angle range is halved: $\mathrm{mid}_\phi := \frac{\mathrm{max}_\phi+\mathrm{min}_\phi}{2}$.
For the radial direction, we choose the splitting radius to result in an equal division of probability mass.
The total probability mass in a ring delimited by $\min_r$ and $\max_r$ is $J(\mathrm{max}_r) - J(\mathrm{min}_r)$.
Since $j(r)$ is positive for $r$ between $R$ and 0, the restricted function $J|_{[0,R]}$ defined above is a bijection.
The inverse $(J|_{[0,R]})^{-1}$ thus exists and we set the splitting radius $\mathrm{mid}_r$ to $(J|_{[0,R]})^{-1}\left(\frac{J(\mathrm{max}_r) + J(\mathrm{min}_r)}{2}\right)$.

% For an example of a quasiuniform\footnote{The probability density function is set to $j(r) :=\alpha\frac{\sinh(\alpha r)}{\cosh(\alpha R)-1}$, it is uniform for $\alpha=1$.} distribution within a hyperbolic disk, see \cite[Eq.~(3)]{Looz2015HRG}.
Figure~\ref{fig:visualization-point-probabilities} visualizes a point distribution on a hyperbolic disk with 200 points
and Figure~\ref{fig:quadtree-example} its corresponding quadtree.

\begin{figure}[b]
 \includegraphics[width=\linewidth]{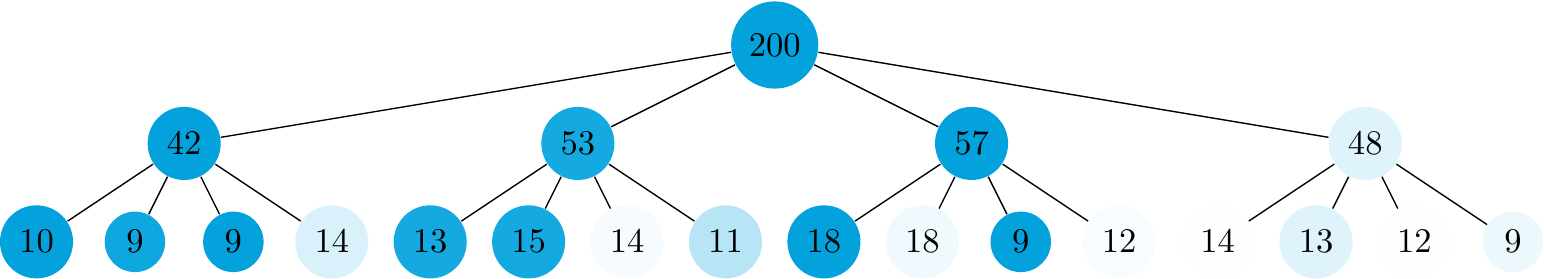}
 \caption{Visualization of the data structure used in Figure~\ref{fig:visualization-point-probabilities}.
 Quadtree nodes are colored according to the upper probability bound for points contained in them.
 The color of a quadtree node $c$ is the darkest possible shade (dark = high probability) of any point contained in the subtree rooted at $c$.
 Each node is marked with the number of points in its subtree.}
  \label{fig:quadtree-example}
 \end{figure}

Two results on quadtree properties help to establish the time complexity of quadtree operations.
They are generalized versions of our previous work~\cite[Lemmas~1 and~2]{Looz2015HRG} and state that each quadtree cell contains the same expected number of points and that the quadtree height is $O(\log n)$ whp (proofs in Appendix~\ref{sec:basic-qt-proofs}).

\begin{lemma}
Let $\mathbb{D}_R$ be a hyperbolic or Euclidean disk of radius $R$, $j$ a probability distribution on $\mathbb{D}_R$ which fulfills the properties defined in Section~\ref{sub:notation}, $p$ a point in $\mathbb{D}_R$ which is sampled from $j$, and $T$ be a polar quadtree on $\mathbb{D}_R$.
Let $C$ be a quadtree cell at depth $i$. Then, the probability that $p$ is in $C$ is $4^{-i}$.
\label{lemma:node-cell-probabilities}
\end{lemma}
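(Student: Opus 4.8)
The plan is to compute $\Pro{p \in C}$ directly from the density $j$ and then argue by induction on the depth $i$. The key structural fact I would establish first is that, for \emph{any} quadtree cell $C = (\mathrm{min}_\phi, \mathrm{max}_\phi, \mathrm{min}_r, \mathrm{max}_r)$, the probability mass in $C$ factors into an angular and a radial contribution:
\[
\Pro{p \in C} = \frac{\mathrm{max}_\phi - \mathrm{min}_\phi}{2\pi} \cdot \bigl(J(\mathrm{max}_r) - J(\mathrm{min}_r)\bigr).
\]
This factorization is where the hypotheses on $j$ do all the work, and I expect it to be the main obstacle. Because $j$ is rotationally invariant, the density does not depend on $\phi$, so conditioned on the radius the angle is distributed uniformly on $[0, 2\pi)$; this yields the factor $(\mathrm{max}_\phi - \mathrm{min}_\phi)/(2\pi)$. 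The radial factor is exactly $J(\mathrm{max}_r) - J(\mathrm{min}_r)$ by the definition of $J$ as the fraction of probability mass inside a given radius. Crucially, since $J$ is defined as this mass fraction, it already absorbs the metric-specific area element (the Euclidean $r\,dr\,d\phi$ or its hyperbolic analogue), so the argument is identical in both geometries; rotation about the origin is an isometry of each, which is what makes rotational invariance meaningful there.

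With the factorization in hand, I would prove the claim by induction on $i$. For the base case $i = 0$, the root cell spans the full angular range $2\pi$ and radial range $[0, R]$, so the factorization gives $\frac{2\pi}{2\pi} \cdot (J(R) - J(0)) = 1 \cdot (1 - 0) = 1 = 4^0$. For the inductive step, suppose a cell $C$ at depth $i$ satisfies $\Pro{p \in C} = 4^{-i}$. The angular split sets $\mathrm{mid}_\phi = (\mathrm{max}_\phi + \mathrm{min}_\phi)/2$, halving the angular range and hence the angular factor for each child. The radial split chooses $\mathrm{mid}_r = (J|_{[0,R]})^{-1}\bigl((J(\mathrm{max}_r) + J(\mathrm{min}_r))/2\bigr)$, so that $J(\mathrm{mid}_r) - J(\mathrm{min}_r) = J(\mathrm{max}_r) - J(\mathrm{mid}_r) = \tfrac{1}{2}\bigl(J(\mathrm{max}_r) - J(\mathrm{min}_r)\bigr)$, halving the radial factor for each child. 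Each of the four children therefore carries a factor $\tfrac12 \cdot \tfrac12 = \tfrac14$ of the parent's mass, so by the factorization its probability is $\tfrac14 \cdot 4^{-i} = 4^{-(i+1)}$, completing the induction.

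A detail I would double-check is that $J|_{[0,R]}$ is a genuine bijection so that the radial split is well defined; this is exactly the content of the assumption that $j$ is positive on $[0, R)$, which makes $J$ strictly increasing there, and it was already invoked in the construction. I would also note that the argument never uses any property of the particular subtree rooted at $C$, nor of $n$ — the statement concerns a single fixed cell and a single sampled point — so the conclusion is an exact deterministic identity rather than a high-probability bound.
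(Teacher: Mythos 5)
Your proof is correct and follows essentially the same route as the paper's: both establish the factorization $\Pro{p \in C} = \frac{\mathrm{max}_\phi - \mathrm{min}_\phi}{2\pi}\cdot\bigl(J(\mathrm{max}_r) - J(\mathrm{min}_r)\bigr)$ from rotational invariance and the definition of $J$, and then induct on the depth, using the angular and radial splitting rules to show each child carries exactly a quarter of its parent's probability mass. Your additional remarks (that $J$ absorbs the metric-specific area element, and that $J|_{[0,R]}$ is a bijection because $j$ is positive) make explicit what the paper leaves implicit, but the argument is the same.
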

\begin{lemma}
Just as in Lemma~\ref{lemma:node-cell-probabilities}, 
let $\mathbb{D}_R $ be a hyperbolic or Euclidean disk of radius $R$, $j$ a probability distribution on $\mathbb{D}_R$ which fulfills the properties defined in Section~\ref{sub:notation}, and $T$ be a polar quadtree on $\mathbb{D}_R$.
The expected number of  nodes in $T$ is then in $O(n)$.
\label{lemma:bound-number-quadtree-cells}
\end{lemma}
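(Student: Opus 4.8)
The plan is to bound the expected number of nodes by linearity of expectation, summing over all potential quadtree cells the probability that each is realized as a node. Since the quadtree is $4$-ary and a split always creates all four children, the tree is a full $4$-ary tree, so its total number of nodes equals $4I+1$, where $I$ is the number of internal (split) cells; it therefore suffices to show $\Ex{I} = O(n)$. A cell $C$ becomes internal precisely when the number of input points falling inside its region exceeds the fixed capacity $\kappa \ge 1$: once a cell holds more than $\kappa$ points it is split and stays split, and any cell with more than $\kappa$ points has a parent containing at least as many, so it is guaranteed to be reachable. Consequently
\[
\Ex{I} = \sum_{i \ge 0}\ \sum_{C \text{ at depth } i} \Pro{C \text{ contains more than } \kappa \text{ points}},
\]
and the task reduces to estimating these tail probabilities.

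For a fixed cell $C$ at depth $i$, Lemma~\ref{lemma:node-cell-probabilities} states that each of the $n$ i.i.d.\ points lands in $C$ with probability $4^{-i}$; by independence, the number of points in $C$ is distributed as $\mathrm{Bin}(n, 4^{-i})$. I would split the outer sum at depth $i_0 := \lceil \log_4 n \rceil$. For the \emph{shallow} cells ($i \le i_0$) I bound each probability trivially by $1$ and use that there are at most $4^i$ cells at depth $i$, so their total contribution is $\sum_{i=0}^{i_0} 4^i = O(4^{i_0}) = O(n)$. For the \emph{deep} cells ($i > i_0$) this trivial bound is useless, so I use the binomial tail $\Pro{\mathrm{Bin}(n,4^{-i}) > \kappa} \le \binom{n}{\kappa+1} 4^{-i(\kappa+1)} \le (n\,4^{-i})^{\kappa+1}$, multiply by the at most $4^i$ cells at depth $i$, and sum the resulting geometric series.

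The crucial point, and the main obstacle, is the convergence of the deep-cell sum, which hinges on the capacity being at least $1$ so that a split requires strictly more than one point. Here the contribution is
\[
\sum_{i > i_0} 4^i \,(n\,4^{-i})^{\kappa+1} = n^{\kappa+1}\sum_{i > i_0} 4^{-i\kappa} = n^{\kappa+1}\cdot O\!\left(4^{-i_0\kappa}\right) = n^{\kappa+1}\cdot O\!\left(n^{-\kappa}\right) = O(n),
\]
where $\kappa \ge 1$ guarantees both that the geometric series has ratio $4^{-\kappa} < 1$ and that the factor $n^{\kappa+1}$ is cancelled by $4^{-i_0\kappa} = O(n^{-\kappa})$ (using $4^{i_0} \ge n$). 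I would stress that replacing the genuine tail by the naive union bound $\Pro{\ge 1 \text{ point in } C} \le n\,4^{-i}$ makes this sum diverge, so exploiting the ``more than $\kappa$ points'' event (with $\kappa+1 \ge 2$) is essential. Combining the shallow and deep regimes gives $\Ex{I} = O(n)$, and hence the expected number of nodes $4\,\Ex{I}+1 = O(n)$, as claimed.
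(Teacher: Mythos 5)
Your proof is correct, and it takes a genuinely different route from the paper's. The paper argues structurally about where empty leaves come from: an empty leaf arises only from a split, and a split either separates points or is an ``excess split'' whose points all land in one child; by Lemma~\ref{lemma:node-cell-probabilities}, two points sharing a cell are separated by a split with probability $3/4$, so the expected number of excess splits is at most $\sum_{i\ge0} i\,4^{-i}=4/9$ per point, hence $O(n)$ overall, and after contracting the excess splits every remaining internal node has at least two non-empty subtrees, giving $O(n)$ nodes. You instead take a first-moment approach over all $4^i$ potential cells of the infinite complete quadtree: by Lemma~\ref{lemma:node-cell-probabilities} the occupancy of a depth-$i$ cell is $\mathrm{Bin}(n,4^{-i})$, a cell is internal exactly when this count exceeds the capacity $\kappa$, and splitting the sum at depth $i_0=\lceil\log_4 n\rceil$ --- trivial bound above, binomial tail $\binom{n}{\kappa+1}4^{-i(\kappa+1)}$ plus a geometric series below --- gives $O(n)$ expected internal nodes, hence $O(n)$ nodes in the full $4$-ary tree. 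Both arguments rest on Lemma~\ref{lemma:node-cell-probabilities}, and both implicitly need capacity $\kappa\ge1$ (a single point must never force a split), but your version makes this dependence explicit and quantitative --- your remark that the naive ``at least one point'' union bound diverges pinpoints exactly why --- and it avoids the paper's somewhat informal accounting of excess splits (the overcounting footnote and the reconnection of the pruned tree). In exchange, the paper's argument gives structural insight into which splits create empty cells and needs no depth cutoff. Your truncation at depth $\Theta(\log n)$ also mirrors the device the paper itself uses to prove Proposition~\ref{thm:quadtree-height}, so the two proofs coexist naturally within the paper's machinery.
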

\begin{proposition}
 \label{thm:quadtree-height}
Let $\mathbb{D}_R$ and $j$ be as in Lemma~\ref{lemma:node-cell-probabilities}.
Let $T$ be a polar quadtree on $\mathbb{D}_R$ constructed to fit $j$.
Then, for $n$ sufficiently large, $\mathrm{height}(T) \in O(\log n)$ whp.
\end{proposition}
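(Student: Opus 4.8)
The plan is to bound the probability of a deep tree by a union bound over cells, driven entirely by Lemma~\ref{lemma:node-cell-probabilities}. Let $\kappa$ denote the (constant) leaf capacity, so that a cell is split precisely when it holds more than $\kappa$ points. The key structural observation is that $\mathrm{height}(T) > h$ can only occur if some cell at depth $h$ was split, which requires that cell to contain at least $\kappa+1$ of the $n$ input points. Hence
\[
\Pro{\mathrm{height}(T) > h} \leq \Pro{\exists\, C \text{ at depth } h \text{ with } |C \cap \pointset| \geq \kappa+1}.
\]

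Next I would estimate the right-hand side by combining Lemma~\ref{lemma:node-cell-probabilities} with the fact that the $n$ points are drawn independently from $j$. By the lemma, a fixed depth-$h$ cell $C$ contains any given point with probability $4^{-h}$, so $|C \cap \pointset|$ is binomially distributed with parameters $n$ and $4^{-h}$. For a fixed $C$, the event $|C \cap \pointset| \geq \kappa+1$ forces some set of $\kappa+1$ points to all land in $C$; a union bound over the $\binom{n}{\kappa+1}$ such sets gives
\[
\Pro{|C \cap \pointset| \geq \kappa+1} \leq \binom{n}{\kappa+1}\bigl(4^{-h}\bigr)^{\kappa+1} \leq \bigl(n\,4^{-h}\bigr)^{\kappa+1}.
\]
Summing over the at most $4^h$ cells at depth $h$ of the idealized complete quadtree then yields
\[
\Pro{\mathrm{height}(T) > h} \leq 4^h \bigl(n\,4^{-h}\bigr)^{\kappa+1} = n^{\kappa+1}\,4^{-\kappa h}.
\]

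Finally I would pick $h = c\log n$ large enough to drive this below $1/n$. Solving $n^{\kappa+1}\,4^{-\kappa h} \leq 1/n$ gives $h \geq \frac{\kappa+2}{\kappa\ln 4}\ln n$, i.e.\ a threshold of the form $c\log n$ with $c$ depending only on the constant $\kappa$. Therefore $\mathrm{height}(T) \leq c\log n \in O(\log n)$ with probability at least $1 - 1/n$, which establishes the claim for $n$ sufficiently large.

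I expect the only delicate points to be bookkeeping rather than genuinely hard. First, one must justify that the per-cell counts are truly binomial with the stated parameters and that the points enter the bound independently; this is exactly what Lemma~\ref{lemma:node-cell-probabilities} supplies together with the i.i.d.\ sampling of $\pointset$ from $j$. Second, the union bound should range over the $4^h$ cells of the \emph{idealized} complete quadtree rather than only the cells actually instantiated by the construction, so that the data-dependent splitting rule introduces no awkward conditioning; since empty or under-capacity cells only decrease the count, this over-counting is safe. A minor point is to confirm that the relation ``height $>h$ implies a depth-$h$ cell with $\geq \kappa+1$ points'' holds under the chosen convention for $\mathrm{height}$, which it does because any node at depth $h+1$ arises only from splitting an over-capacity parent at depth $h$.
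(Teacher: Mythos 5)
Your proof is correct and follows essentially the same route as the paper's: both compare $T$ against an idealized complete quadtree of depth $\Theta(\log n)$, use Lemma~\ref{lemma:node-cell-probabilities} together with the independence of the $n$ points to get a per-cell occupancy bound, and conclude with a union bound that whp no cell at that depth exceeds capacity (the paper does this via a balls-into-bins collision bound over $n^3$ leaf cells at depth $3\lceil\log_4 n\rceil$, which is exactly your computation specialized to capacity $\kappa=1$). The only cosmetic difference is that you parametrize by the leaf capacity $\kappa$ and solve for the threshold $h \geq \frac{\kappa+2}{\kappa\ln 4}\ln n$ directly, rather than citing the prior balls-into-bins lemma.
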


A direct consequence from the results above and our previous work~\cite{Looz2015HRG} is the preprocessing
time for the quadtree construction. The generalized splitting rule and storing the subtree sizes only change constant factors.

\begin{corollary}
\label{cor:qt-construction}
Since a point insertion takes $O(\log n)$ time whp, constructing a quadtree on $n$ points distributed as 
in Section~\ref{sub:notation} takes $O(n \log n)$ time whp.
\end{corollary}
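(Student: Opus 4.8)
The plan is to bound the total construction cost by summing the costs of the $n$ individual point insertions and to control each summand by the quadtree height, which Proposition~\ref{thm:quadtree-height} bounds by $O(\log n)$ whp. Concretely, building the quadtree amounts to inserting the $n$ sampled points one after another, so the construction time equals $\sum_{k=1}^{n} t_k$, where $t_k$ is the time to perform the $k$-th insertion into the tree that already holds $k-1$ points.

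First I would verify that a single insertion costs $O(\mathrm{height}(T))$. Inserting a point $p$ consists of descending from the root to the leaf responsible for $p$, at each visited node deciding in constant time which of the four children contains $p$ (using the angular midpoint $\mathrm{mid}_\phi$ and the radial split $\mathrm{mid}_r$, which we treat as computable in $O(1)$) and incrementing the stored subtree size. This touches $O(\mathrm{height}(T))$ nodes and does $O(1)$ work at each. A full leaf triggers a split into four children together with a redistribution of its bounded capacity of points; even a cascade of such splits along a single insertion path is limited to the depth reached, hence also $O(\mathrm{height}(T))$ work. Thus $t_k = O(\mathrm{height}(T_k))$, where $T_k$ is the tree after $k$ insertions.

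The only point requiring care is how the probabilistic height bound is applied across all $n$ insertions. A union bound over the per-insertion statements would be too lossy: with the convention that whp means probability $\ge 1 - 1/n$, summing $n$ individual failure probabilities of $1/n$ yields no guarantee. Instead I would condition on the single event $E = \{\mathrm{height}(T) \le c\log n\}$ for the final tree $T$, which holds with probability $\ge 1 - 1/n$ by Proposition~\ref{thm:quadtree-height}. Since insertions only add nodes and never delete them, the height is monotonically non-decreasing, so on $E$ every intermediate tree satisfies $\mathrm{height}(T_k) \le \mathrm{height}(T) \le c\log n$ simultaneously. Consequently, on $E$ the construction time is $\sum_{k=1}^{n} O(\mathrm{height}(T_k)) = \sum_{k=1}^{n} O(\log n) = O(n\log n)$, and this bound therefore holds whp. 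The generalized splitting rule and the additional subtree-size counters only affect constant factors, so the conclusion matches the $O(n\log n)$ claim and reduces, up to those constants, to the corresponding result in the previous work~\cite{Looz2015HRG}.
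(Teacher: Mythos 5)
Your proof is correct and follows essentially the same route as the paper, which treats the corollary as a direct consequence of Proposition~\ref{thm:quadtree-height}: each insertion descends one root-to-leaf path at $O(1)$ cost per node, so $n$ insertions cost $O(n\log n)$ given the height bound, with the generalized splitting rule and subtree-size counters affecting only constant factors. Your extra step---replacing a lossy union bound over per-insertion events by conditioning on the single event that the \emph{final} tree has height $O(\log n)$ and using monotonicity of the height under insertions---is exactly the detail the paper leaves implicit, and it is handled correctly.
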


\subsection{Algorithm}
\label{sub:baseline-algo}
The baseline version of our query (Algorithm~\ref{algo:quadnode-probabilistic}) has unfortunately a time complexity of $\Theta(n)$, but serves as a foundation for the fast version (Section~\ref{sec:subtree-aggr}).
It takes as input a query point $q$, a function $f$ and a quadtree cell $c$.
Initially, it is called with the root node of the quadtree and recursively descends the tree.
The algorithm returns a point set $\neighborset(q,f) \subseteq P$ with
\begin{equation}
\label{eq:probneigh}
\Pro{p \in \neighborset(q,f)} =  f(\mathrm{dist}(q,p)).
\end{equation}

\begin{algorithm}[tb]
 \KwIn{query point $q$, prob.\ function $f$, quadtree node $c$}
 \KwOut{probabilistic neighborhood of \textit{q}}
 $\neighborset = \{\}$\;
  $\underline{b} = \dist(q, c$)\;\label{line:distanceLB} \tcc{Distance between point and cell}
 $\overline{b}$=$f(\underline{b}$)\;\label{line:probUB} \tcc{Since $f$ is monotonically decreasing, a lower bound for the distance gives an upper bound $\overline{b}$ for the probability.}
 $s$ = number of points in $c$\;
 \If{$c$ is not leaf}{\label{line:notleafcell}
 \tcc{internal node: descend, add recursive result to local set}
 \For{child $\in$ children($c$)}{
   add getProbabilisticNeighborhood($q$, $f$, child) to \neighborset \;\label{line:calling-child}
 }
}
 \Else{\tcc{leaf case: apply idea of Batagelj and Brandes~\cite{batagelj2005efficient}} 
 \For{i=0; $i < s$ ; i++}{\label{line:loop-iteration} \label{line:candidate-loop}
 $\delta = \ln(1-\mathit{rand}) / \ln(1-\overline{b})$\;\label{line:deltaskip}
 $i$ += $\delta$\;\label{line:jumptarget}
 \If{$i \geq s$}{
 break\;
 }
 $\mathit{prob}$ = $f(\dist(q$, c.points[$i$]))/$\overline{b}$\;\label{line:candidate-confirmation}
 add c.points[$i$] to $\neighborset$ with probability $\mathit{prob}$\label{line:add-confirmed}
 }
 }
 \Return{$\neighborset$}
 \caption{QuadNode.getProbabilisticNeighborhood}
 \label{algo:quadnode-probabilistic}
\end{algorithm}

Algorithm~\ref{algo:quadnode-probabilistic} descends the quadtree recursively until it reaches the leaves.
Once a leaf $l$ is reached, a lower bound $\underline{b}$ for the distance between the query point $q$ and all the points in $l$ is computed (Line~\ref{line:distanceLB}). Such distance calculations are detailed in Appendix~\ref{sec:distance}.
Since $f$ is monotonically decreasing, this lower bound for the distance gives an upper bound $\overline{b}$ for the probability that a given point in $l$ is a member of the returned point set (Line~\ref{line:probUB}).
This bound is used to select \emph{neighbor candidates} in a similar manner as Bategelj and Brandes~\cite{batagelj2005efficient}: 
In Line~\ref{line:deltaskip}, a random number of vertices is skipped, so that every vertex in $l$ is selected as a neighbor candidate with probability $\mathrm{\overline{b}}$.
The actual distance $\mathrm{dist}(q,a)$ between a candidate $a$ and the query point $q$ is at least $\mathrm{\underline{b}}$ and the probability of $a \in \neighborset(q, f)$ thus at most $\mathrm{\overline{b}}$.
For each candidate, this actual distance $\mathrm{dist}(q,a)$ is then calculated and a neighbor candidate is confirmed as a neighbor with probability $f(\mathrm{dist}(q,a))/\mathrm{\overline{b}}$ in Line~\ref{line:candidate-confirmation}.

Regarding correctness and time complexity of Algorithm~\ref{algo:quadnode-probabilistic}, we can state:% two results:

\begin{proposition}
Let $T$ be a quadtree as defined above, $q$ be a query point and $f : \mathbb{R}^+ \rightarrow [0,1]$ a monotonically decreasing function
which maps distances to probabilities.
The probability that a point $p$ is returned by a PNQ ($q, f$) from Algorithm~\ref{algo:quadnode-probabilistic} is $f(\text{dist}(q,p))$,
independently from whether other points are returned.
\label{lemma:independent-correct-probabilities}
\end{proposition}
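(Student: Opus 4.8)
The plan is to reduce the global statement to a per-leaf, per-point analysis and then recombine. Since the quadtree cells form a partition of $\mathbb{D}_R$, every point $p \in P$ lies in exactly one leaf cell $l$, and the recursion in Algorithm~\ref{algo:quadnode-probabilistic} visits each leaf exactly once (at every internal node it recurses into all children, terminating only at leaves). Hence the event that $p$ is added to $\neighborset(q,f)$ is decided entirely inside the single invocation processing $l$, using randomness drawn independently of every other leaf's invocation. So it suffices to show that, for the leaf $l$ containing $p$, the algorithm returns $p$ with probability exactly $f(\dist(q,p))$, and that within a leaf distinct points are handled independently.

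First I would check that the confirmation probability of Line~\ref{line:candidate-confirmation} is well-defined, \iec lies in $[0,1]$. By construction $\underline{b} = \dist(q,l)$ is a lower bound on $\dist(q,p)$ for every $p \in l$, so monotonicity of $f$ gives $\overline{b} = f(\underline{b}) \geq f(\dist(q,p))$. Thus $f(\dist(q,p))/\overline{b} \in [0,1]$ and Line~\ref{line:add-confirmed} is a legitimate Bernoulli trial. (If $\overline{b} = 0$ then $f(\dist(q,p)) = 0$ as well, no candidate is ever produced, and the claimed probability $0$ holds trivially.)

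Next I would decompose $\{p \in \neighborset(q,f)\}$ into the two-stage process the algorithm implements. Let $A_p$ be the event that $p$ is picked as a neighbor candidate by the skip loop, and $C_p$ the event that $p$ is then confirmed in Line~\ref{line:add-confirmed}. Invoking the correctness of the random-jump sampling of Batagelj and Brandes~\cite{batagelj2005efficient}, the skip length $\delta = \ln(1-\mathit{rand})/\ln(1-\overline{b})$ makes the candidate set distributed exactly as if each point of $l$ were selected independently with probability $\overline{b}$; hence $\Pro{A_p} = \overline{b}$. The confirmation coin for $p$ is fresh and independent, so $\Pro{C_p \mid A_p} = f(\dist(q,p))/\overline{b}$. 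Multiplying the two stages,
\begin{equation*}
\Pro{p \in \neighborset(q,f)} = \Pro{A_p}\cdot\Pro{C_p \mid A_p} = \overline{b}\cdot\frac{f(\dist(q,p))}{\overline{b}} = f(\dist(q,p)),
\end{equation*}
as required.

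Finally I would assemble independence across all points. For two points in different leaves this is immediate, since their invocations consume disjoint randomness. For two points in the same leaf the candidate events are independent by the Batagelj--Brandes property and the confirmation trials use independent coins, so the joint inclusion events factor. I expect the main obstacle to be exactly this independence claim: one must appeal to the fact that the Batagelj--Brandes jump process reproduces the \emph{full} product distribution of independent $\text{Bernoulli}(\overline{b})$ selections, not merely the correct marginal $\overline{b}$ per element, and that truncating the real-valued $\delta$ to an integer index does not perturb these probabilities. Granting that (it is the substance of their analysis), the remaining ingredients are the routine bound $\overline{b} \geq f(\dist(q,p))$ and the telescoping of the candidate and confirmation stages.
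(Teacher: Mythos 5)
Your proposal is correct, and its skeleton coincides with the paper's: both factor the inclusion of $p$ into candidate selection (probability $\overline{b}$) followed by an independent confirmation coin (probability $f(\dist(q,p))/\overline{b}$), multiply the two stages, and reduce independence across points to independence of the candidate process plus freshness of the confirmation coins. The genuine difference is where the probabilistic work is done. You invoke Batagelj--Brandes as a black box for the claim that the truncated-jump process reproduces the \emph{full product distribution} of independent $\mathrm{Bernoulli}(\overline{b})$ selections; the paper instead re-derives exactly this, and that derivation is essentially its entire proof: its Step~1 computes $\Pr(i \leq \delta < i+1) = (1-\overline{b})^i\,\overline{b}$ from the uniformity of $\mathit{rand}$ (taking care that $\ln(1-\overline{b})<0$ reverses the inequalities), its Step~2 is a case analysis showing that whether an intermediate point is a candidate does not change the conditional probability that a later point is one, and its Step~3 iterates this to obtain both the marginal $\overline{b}$ and mutual independence. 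The authors re-prove the property rather than cite it because their sampler is an \emph{adaptation} of Batagelj--Brandes (real-valued jumps, restarted inside each leaf, with a per-leaf bound $\overline{b}$ in place of a global constant $p$), so the published correctness statement for $G(n,p)$ generation does not apply verbatim; your proof leaves to the reader precisely the translation step that the paper writes out, and you candidly flag this dependence yourself. In compensation, your write-up makes explicit two points the paper leaves implicit: the reduction to a single leaf (each point's fate is decided in exactly one recursive invocation, with randomness disjoint across leaves), and the well-definedness of the acceptance ratio $f(\dist(q,p))/\overline{b} \in [0,1]$, including the degenerate case $\overline{b}=0$. Neither observation is deep, but both belong in a complete argument, so the two proofs are best read as complements: yours is the cleaner high-level structure, the paper's supplies the self-contained verification your citation stands in for.
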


\begin{proposition}
\label{cor:basic-time}
Let $T$ be a quadtree with $n$ points.
The running time of Algorithm~\ref{algo:quadnode-probabilistic} per query on $T$ is $\Theta(n)$ in expectation.
\end{proposition}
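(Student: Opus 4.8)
The plan is to prove the two bounds separately, splitting the total work into the per-node overhead of the recursive descent and the candidate sampling performed at the leaves.

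For the lower bound I would argue directly from the structure of the tree. Since every leaf cell has a fixed capacity bounded by a constant and all $n$ points reside in leaves, the tree has $\Omega(n)$ leaves, hence $\Omega(n)$ nodes. Algorithm~\ref{algo:quadnode-probabilistic} performs no pruning: starting at the root it recurses into \emph{every} child, so it visits each of these nodes at least once. Thus the running time is $\Omega(n)$, and this holds deterministically for any fixed $T$ holding $n$ points.

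For the upper bound I would bound the two contributions in turn. At each visited node the algorithm does only constant extra work besides the recursion: looking up the stored subtree size, computing the distance lower bound $\underline{b}$ (which is $O(1)$, as detailed in the appendix), and evaluating $\overline{b}=f(\underline{b})$. By Lemma~\ref{lemma:bound-number-quadtree-cells} the expected number of nodes in $T$ is $O(n)$, so the total traversal overhead is $O(n)$ in expectation. For the leaf work, consider a leaf $\ell$ holding $s_\ell$ points. The Batagelj--Brandes jumps select each of its points as a neighbor candidate with probability $\overline{b}\le 1$, so the expected number of candidates is $s_\ell\overline{b}\le s_\ell$; the candidate loop therefore performs an expected $s_\ell\overline{b}+1\le s_\ell+1$ iterations, the extra one being the terminating jump that overshoots $s_\ell$. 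Each iteration costs $O(1)$: one logarithm to draw $\delta$, one exact distance evaluation, and one biased coin flip. Summing over all leaves and using $\sum_\ell s_\ell=n$ together with the $O(n)$ bound on the number of leaves yields total expected leaf work $O(n)$.

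Combining the two parts, the expected running time is $O(n)$ from above and $\Omega(n)$ from below, hence $\Theta(n)$. The step requiring the most care is the leaf accounting: I must make sure the geometric-jump loop really costs $O(1)$ per iteration and runs only $O(s_\ell)$ times in expectation, and keep clear which randomness each expectation is taken over. The node-count bound via Lemma~\ref{lemma:bound-number-quadtree-cells} is over the point distribution, whereas the candidate count is over the algorithm's internal coin flips; the two compose cleanly because the per-leaf candidate bound $s_\ell\overline{b}$ holds for every fixed tree shape.
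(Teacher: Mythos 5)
Your proof is correct and takes essentially the same route as the paper's: both decompose the cost into the number of examined quadtree cells plus the number of candidate-loop iterations, bound the expected cell count by $O(n)$ via Lemma~\ref{lemma:bound-number-quadtree-cells}, bound the expected candidates by $n$ (you do this per leaf via $s_\ell\overline{b}\le s_\ell$, the paper via $\candidateset(q,f)\subseteq\pointset$), and obtain the lower bound from the fact that the algorithm visits all $\Omega(n)$ cells. Your version merely makes explicit two points the paper leaves implicit, namely the $\Omega(n)$ leaf count from the constant leaf capacity and the separation of the two sources of randomness.
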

The proofs can be found in Appendices~\ref{sub:proof-lemma-indep} and \ref{sec:proof-cor-basic-time}.

%%%%%%%%%%%%%%%%%%%%%%%%%%%%%%%%%%%%
\section{Queries in Sublinear Time by Subtree Aggregation}
\label{sec:subtree-aggr}
%
%\subsection{Algorithm}
%\label{sub:aggr-algo}
%
One reason for the linear time complexity of the baseline query is the fact that every quadtree node is visited.
To reach a sublinear time complexity, we thus aggregate subtrees into \emph{virtual leaf cells} whenever doing so reduces the number of examined cells and does not increase the number of candidates too much.

To this end, let $S$ be a subtree starting at depth $l$ of a quadtree $T$.
During the execution of Algorithm~\ref{algo:quadnode-probabilistic}, a lower bound $\underline{b}$ for the distance between $S$ and the query point $q$ is calculated,
yielding also an upper bound $\overline{b}$ for the neighbor probability of each point in $S$.
At this step, it is possible to treat $S$ as a \emph{virtual leaf cell}, sample jumping widths using $\overline{b}$ as upper bound and use these widths to select candidates within $S$.
Aggregating a subtree to a virtual leaf cell allows skipping leaf cells which do not contain candidates, but uses a weaker bound $\overline{b}$ and thus a potentially larger candidate set.
Thus, a fast algorithm requires an aggregation criterion which keeps both the number of candidates and the number of examined quadtree cells low.

As stated before, we record the number of points in each subtree during quadtree construction.
This information is now used for the query algorithm:
We aggregate a subtree $S$ to a virtual leaf cell exactly if $|S|$, the number of points contained in $S$, is below $1 / f(\text{dist}(S,q))$.
This corresponds to less than one expected candidate within $S$.
The changes required in Algorithm~\ref{algo:quadnode-probabilistic} to use the subtree aggregation are minor.
Lines~\ref{line:notleafcell}, \ref{line:candidate-confirmation} and \ref{line:add-confirmed} are changed to:

\LinesNotNumbered
\begin{algorithm}[H]
\nlset{5} \textbf{if}\emph{$c$ is inner node and $|c|\cdot\overline{b}\geq 1$}\textbf{ then}
\end{algorithm}

\begin{algorithm}[H]
 \nlset{14} neighbor = maybeGetKthElement($q$, $f$, $i$, $\overline{b}$, $c$)\;
 \nlset{15} add neighbor to $\neighborset$ if not null
\end{algorithm}

The main change consists in the use of the function maybeGetKthElement (Algorithm~\ref{algo:maybe-get-kth-element}, Appendix~\ref{sec:maybe-get-kth-element}).
Given a subtree $S$, an index $k$, $q$, $f$, and $\overline{b}$, this function descends $S$ to the leaf cell containing the
$k$th element. This element $p_k$ is then accepted with probability $f(\dist(q, p_k)) / \overline{b}$.

Since the upper bound calculated at the root of the aggregated subtree is not smaller than the individual upper bounds at the original leaf cells, Proposition~\ref{lemma:independent-correct-probabilities} also holds for the virtual leaf cells. This establishes the correctness.

The time complexity is given by the following theorem, whose proof can be found in Appendix~\ref{sec:proof-subtree-aggregation-complexitx}.
%%%%%%%%%%%%%%%%%

\begin{theorem}
Let $T$ be a quadtree with $n$ points and $(q,f)$ a query pair.
A query $(q,f)$ using subtree aggregation has time complexity $O((|\neighborset(q,f)| + \sqrt{n}) \log n)$ whp.
\label{lemma:subtree-aggregation-complexity}
\end{theorem}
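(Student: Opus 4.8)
The plan is to split the total work into two parts and bound each separately: the cost of \emph{descending and examining cells}, which is $O(1)$ per visited cell, and the cost of \emph{processing candidates}, where every candidate triggers a call to maybeGetKthElement that descends a subtree in $O(\mathrm{height}(T)) = O(\log n)$ time whp (Proposition~\ref{thm:quadtree-height}). Thus the running time is $O(|\cellset(q,f)| + |\candidateset(q,f)|\cdot \log n)$ whp, and it remains to show that both $|\cellset(q,f)|$ and $|\candidateset(q,f)|$ are $O(|\neighborset(q,f)| + \sqrt n)$ whp.

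First I would bound the number of examined cells. A cell is descended into (an internal node of the recursion) exactly when $|c|\cdot f(\dist(q,c)) \geq 1$; since the quadtree is $4$-ary, the number of virtual leaves is at most a constant times the number of such internal nodes, so $|\cellset(q,f)| = O(\#\{\text{internal } c\})$. Using Lemma~\ref{lemma:node-cell-probabilities} together with a Chernoff bound, a cell at depth $i$ contains $|c| = \Theta(n4^{-i})$ points whp, so ``internal'' means $f(\dist(q,c)) \gtrsim 4^i/n$. I would then separate internal cells into those lying entirely inside the query-centered threshold circle of radius $\rho_i$ defined by $f(\rho_i) = 4^i/n$, and those straddling it. For the \emph{interior} cells, I would sum $4^i$ over the probability mass contained within distance $\rho_i$ (each depth-$i$ cell carries mass $4^{-i}$) and exchange the order of summation over depth and mass; the inner geometric sum collapses to $\Theta(n\,f(r))$, yielding a total of $O\!\left(n\int f\,\mathrm{d}(\text{mass})\right) = O(\Ex{|\neighborset(q,f)|})$.

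For the \emph{boundary} cells I would use the bounded aspect ratio of the polar quadtree: at depth $i$ the threshold circle crosses only $O(2^i)$ of the equal-mass cells. Because a cell with $\overline b \leq 1$ can only be internal while $|c| \geq 1$, i.e.\ up to depth $\approx \tfrac12\log_2 n$ where $2^i = \sqrt n$, summing the geometric series $\sum_i O(2^i)$ over admissible depths gives $O(\sqrt n)$. Combining the two contributions yields $|\cellset(q,f)| = O(\Ex{|\neighborset(q,f)|} + \sqrt n)$, and since candidates are only drawn in virtual leaves where the aggregation rule guarantees at most one expected candidate, $\Ex{|\candidateset(q,f)|} \leq \#\{\text{virtual leaves}\} = O(\Ex{|\neighborset(q,f)|} + \sqrt n)$ as well. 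Finally I would pass from expectations to a whp statement: Proposition~\ref{lemma:independent-correct-probabilities} makes $|\neighborset(q,f)|$ a sum of independent indicators, so a Chernoff bound gives $\Ex{|\neighborset(q,f)|} = O(|\neighborset(q,f)| + \sqrt n)$ whp (when the mean is small the $\sqrt n$ term dominates), and a second Chernoff bound concentrates $|\candidateset(q,f)|$ around its mean; a union bound over the $O(n)$ cells (Lemma~\ref{lemma:bound-number-quadtree-cells}) together with the height bound controls all the whp events simultaneously, giving the claimed $O((|\neighborset(q,f)| + \sqrt n)\log n)$.

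The main obstacle I expect is the geometric boundary-cell estimate: proving that the query-centered threshold circle meets only $O(2^i)$ equal-\emph{mass} polar cells at depth $i$, uniformly for both the Euclidean and the hyperbolic metric and for an arbitrary rotationally invariant density, is where the bounded aspect ratio of the quadtree must be exploited carefully. A secondary difficulty is the whp bookkeeping: cell-occupancy concentration from Lemma~\ref{lemma:node-cell-probabilities} degrades for the deep, nearly empty cells, so the Chernoff and union-bound arguments must be restricted to cells whose expected occupancy is large enough, with the deepest levels absorbed into the $O(\sqrt n)$ term via the height bound.
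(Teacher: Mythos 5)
Your high-level decomposition (cells at $O(1)$ each, candidates at $O(\log n)$ each via maybeGetKthElement and Proposition~\ref{thm:quadtree-height}) is sound, and your interior-cell computation --- exchanging the sum $\sum_i 4^i\cdot\mathrm{mass}(B(q,\rho_i))$ into $O(\Ex{|\neighborset(q,f)|})$ --- is a correct and elegant calculation. The genuine gap is exactly where you flag a ``secondary difficulty'': the deep cells, and your proposed fix does not go through. Reducing the aggregation criterion $|c|\cdot f(\dist(q,c))\ge 1$ to the distance threshold $f(\dist(q,c))\gtrsim 4^i/n$ needs two-sided occupancy concentration $|c|=\Theta(n4^{-i})$, which (after a union bound) holds only for depths $i\le\log_4\!\big(n/(C\log n)\big)$. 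But whp the tree has inner cells far below depth $\log_4 n$: some pair of points lies at distance $O(1/n)$, forcing splits down to depth about $2\log_4 n$; and if $f$ is close to $1$ near $q$ (which the theorem must allow), \emph{every} inner cell there satisfies the criterion regardless of its occupancy. In that depth range both of your counting tools fail simultaneously: the circle-crossing bound $O(2^i)$ exceeds $\sqrt n$ as soon as $i>\log_4 n$ (at depth $2\log_4 n$ it is already $n$), and the mass-based interior count no longer upper-bounds the number of actual cells. The cutoff you invoke (``internal only while $|c|\ge 1$, i.e.\ up to $2^i=\sqrt n$'') silently replaces actual occupancy by expected occupancy, which is precisely what is unavailable there. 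Patching this by counting actual inner cells (each holds more than capacity points) runs into the excess-split problem: Lemma~\ref{lemma:bound-number-quadtree-cells} controls the number of cells only \emph{in expectation}, and chain-based repairs cost an extra $\log n$ factor beyond the claimed bound.

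This is also where your route genuinely diverges from the paper's, in an instructive way: you bound cells geometrically and then candidates by cells (in expectation), whereas the paper does the opposite. It first bounds the candidates whp (Lemma~\ref{lemma:subtree-aggregation-candidates}), via a decomposition into probability bands, Lemma~\ref{lemma:half-prob-set} for candidates lying in the same band as their cell, and the circle-cutting Lemma~\ref{lemma:ring-root} plus Chernoff and union bounds for the ``overhang'' points; it then bounds cells \emph{by} candidates: every deepest descended cell contains at least one candidate with constant probability (Lemma~\ref{lemma:bound-unaggregated-quadtree-cells-with-leaf-children}), so there are $O(|\neighborset(q,f)|+\sqrt n)$ of them whp, and all remaining descended cells sit on chains above these, contributing the factor $\mathrm{height}=O(\log n)$ (Lemma~\ref{lemma:subtree-aggregation-cells}). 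This probabilistic tie between cells and candidates is exactly the device that disposes of the deep cells your outline cannot count; without it or a substitute, your argument yields the theorem only in expectation, or with extra logarithmic factors. Two smaller points: candidates are also drawn in original leaves (constant capacity each), not only in virtual leaves; and your final Chernoff step for $|\candidateset(q,f)|$ must condition on the point positions, since each point's bound $\overline{b}$ is position-dependent, so the layered whp bookkeeping needs to be set up explicitly.
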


%%%%%%%%%%%%%%%%%%%%%%%%%%%%%%%%%%%%%%%%
\section{Application Case Studies}
\label{sec:applications}
In order to test our algorithm for PNQs, we apply it in two application case studies,
one for Euclidean, the other one for hyperbolic geometry. For the Euclidean case study
we build a simple disease spread simulation as an example for a probabilistic spreading process.
The probability distribution of points is in this case non-uniform and unknown. The hyperbolic application, in turn, is a 
generator for complex networks with a known point distribution. 

%%%%%%%%%%%%%%%%%%%%%%%%%%%
\subsection{Probabilistic Spreading}
\label{sub:disease-sim}
When both contact graph and travel patterns of a susceptible population are not known in detail, the resulting spreading behavior of an infectious disease seems probabilistic.
Contagious diseases usually spread to people in the vicinity of infected persons, but an infectious person occasionally bridges larger distances by travel and spreads the disease this way.
We model this effect with our probabilistic neighborhood function $f$, giving a higher probability for small distances and a lower but non-zero probability for larger distances.
Note that this scenario is meant as an example of the probabilistic spreading simulations possible with our algorithm and not as highly realistic from an epidemiological point of view.

In the simulation, the population is given as a set $P$ of points in the Euclidean plane.
 In the initial step, exactly one point (= person) from $P$ is marked as infected. 
Then, in each round, a PNQ is performed
for each infected person $q$.
All points in $N(q, f)$ become infected in the next round.
We use an SIR model~\cite{hethcote2000mathematics}, \ie previously infected persons recover with a 
certain probability in each round and stay infectious otherwise.
In our simulation, persons recover with a rate of 0.8 and are then immune.

%%%%%%%%%%%%%%%%%%%%%%%%%%%
\subsection{Random Hyperbolic Graph Generation}
\label{sec:application-hyperbolic-random-graphs}
Random hyperbolic graphs (RHGs, also see Section~\ref{sec:related}) are a generative graph model for
complex networks. For graph generation one places $n$ points (= vertices) randomly in a hyperbolic disk.
The radius $R$ of the disk can be used to control the average degree of the network.
A pair of vertices is connected by an edge with a probability that depends on the vertices' hyperbolic distance.
This connection probability is given in~\cite[Eq.~(41)]{Krioukov2010} and parametrized by a temperature $T\geq 0$:
\begin{equation}
 f(x) = \frac{1}{e^{(1/T)\cdot(x-R)/2}+1}
 \label{eq:Krioukov-equation}
\end{equation}
This definition of random hyperbolic graphs is a generalized version of the one considered in our previous work, which was restricted to the special case of $T=0$.

%%%%%%%%%%%%%%%%%%%%%
\subsection{Experimental Settings and Results}
\label{sub:exp-results}
Our implementation uses the NetworKit toolkit~\cite{staudt2014networkit} and is 
written in C++ 11. It is included in NetworKit release 4.1.
Running time measurements were made with g++ 4.8 -O3 on a machine with 128 GB RAM
and an Intel Xeon E5-1630 v3 CPU with four cores at 3.7 GHz base frequency. 
Our code is sequential, as is the reference implementation for random hyperbolic graph
generation~\cite{Aldecoa2015}.

\paragraph{Disease Spread Simulation.}
We experimented on three data sets
taken from NASA population density raster 
data~\cite{nasaGridPop} %\footnote{\url{http://sedac.ciesin.columbia.edu/data/set/gpw-v3-population-density/data-download}}
for Germany, France and the USA. They consist of rectangles with small square cells
(geographic areas) where for each cell the population from the year 2000 is given.
To obtain a set of points, we randomly distribute points in each cell to fit 1/20th of the population density.
Figure~\ref{plot:heatMap-and-hyperbolic-disk} (left) in the appendix shows an example with roughly 4 million points on the map of Germany.
The data sets of France and USA have roughly 3 and 14 million points, respectively.

The number of required queries naturally depends heavily on the simulated disease.
For our parameters, a number of 5000 queries is typically reached within the first dozen steps.
To evaluate the algorithmic speedup, Table~\ref{table:running-time-queries-country} compares running times for 5000 pairwise
distance probing (PDP) queries against 5000 fast PNQs %and the corresponding quadtree construction time
on the three country datasets.
To obtain a similar total number of infections, we use a slightly different probabilistic neighborhood function for each country and divide by the population: $f(x) := (1/x)\cdot e^7/n$.
This results in a slower initial progression for the US.
Our algorithm achieves a speedup factor of at least two orders of magnitude, even including the quadtree construction time.

\begin{table}[bt]
\centering 
\begin{tabular}{l|l|l|l}
 Country & 5000 PDP queries & Construction QT & 5000 QT queries\\
 \hline
 France & 1007 seconds & 1.6 seconds & 1.2 seconds\\ 
 Germany & 1395 seconds & 2.8 seconds & 1.3 seconds\\
 USA & 4804 seconds & 8.7 seconds & 0.7 seconds
\end{tabular}
\smallskip 
\caption{Running time results for polar Euclidean quadtrees on population data.
The query points were selected uniformly at random from $\pointset$, the probabilistic neighborhood function is $f(x) := (1/x)\cdot e^7/n$.
}
\label{table:running-time-queries-country}
\end{table}

\paragraph{Random Hyperbolic Graph Generation.}
An example graph generated from hyperbolic geometry can be seen in Figure~\ref{plot:heatMap-and-hyperbolic-disk} (right) in the appendix.
We compare our generator using PNQs with the only (to our knowledge) previously existing generator for general random hyperbolic graphs~\cite{Aldecoa2015}, \ie those not only following the threshold model.
As seen in Figure~\ref{plot:time-scatter}, our implementation is faster by at least one order of magnitude and the experimental running times support our theoretical time complexity of $O((n^{3/2}+m)\log n)$.
A comparison of the generated graphs with those created by the existing implementation can be found in Appendix~\ref{sec:comparison-previous-impl}. The differences measured by a set of suitable network analysis metrics are within the
range of random fluctuations for the sample size of $80$.

\definecolor{markedcolor}{RGB}{31,120,180}
\definecolor{plottinggreen}{RGB}{178,223,138}
\definecolor{thirdhue}{RGB}{228,26,28}
\newcommand{\aconst}{5.605}
\newcommand{\bconst}{2.18}
\newcommand{\cconst}{1.77}
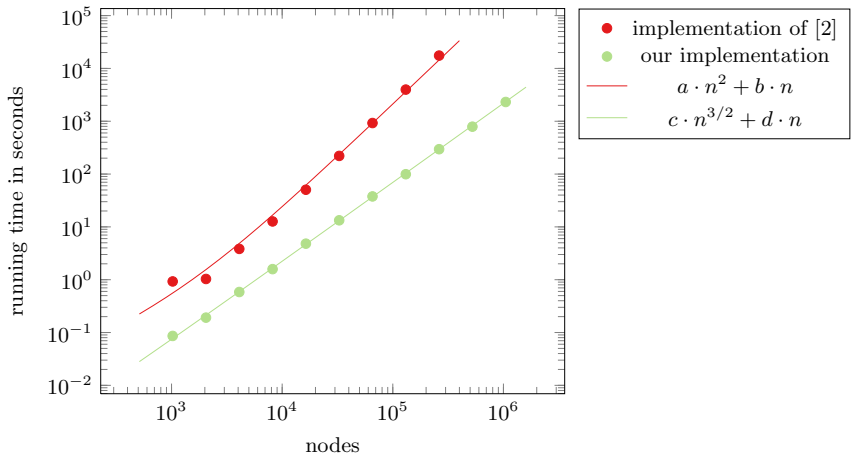
\begin{figure}[tb]
\centering

 \begin{tikzpicture}[scale=0.9]
 \begin{axis}[ymode=log,xmode=log,xlabel=nodes,ylabel=running time in seconds,legend entries={}, legend pos=outer north east]
 \addplot[only marks, thirdhue] table {plots/1465982453.5281825-cost-native.dat};\addlegendentry{implementation of \cite{Aldecoa2015}};
  \addplot[only marks, plottinggreen] table {plots/1465988533.119531-cost-PNQs-single-threaded.dat};\addlegendentry{our implementation};
  \addplot[thirdhue] expression[domain=512:400000] {2.089 * 10^(-7)*x^2 + 3.311*10^(-4)*x};\addlegendentry{$a\cdot n^2 + b\cdot n$};
  \addplot[plottinggreen] expression[domain=512:1600000] {\aconst *10^(-6)*x + \bconst * 10^(-6)*x*sqrt(x)};\addlegendentry{$c\cdot n^{3/2} + d\cdot n$};
 \end{axis}
\end{tikzpicture}

 \caption{Comparison of running times to generate networks with $2^{10}$-$2^{20}$ vertices, $\alpha=1$, $T=$0.5 and average degree $\overline{k}=6$. The gap between the running times widens, which in the loglog-plot implies a different exponent in the time complexities.
 Running times are fitted with $a=2.089 \cdot 10^{-7}$, $b=3.311\cdot10^{-4}$, $c=2.18\cdot10^{-6}$ and $d=5.6\cdot10^{-6}$. }
 \label{plot:time-scatter}
\end{figure}

\section{Conclusions}
After formally defining the notion of probabilistic neighborhoods, we have presented a
quadtree-based query algorithm for such neighborhoods in the Euclidean and hyperbolic plane.
Our analysis shows a time complexity of $O((|N(q,f)| + \sqrt{n})\log n)$, our algorithm is to the best of our knowledge the first to solve the problem asymptotically faster than pairwise distance probing.
With two example applications we have shown that our algorithm is also faster 
in practice by at least one order of magnitude.

\paragraph{Acknowledgements.}
This work is partially supported by German Research Foundation (DFG) grant ME 3619/3-1 within the 
Priority Programme 1736 \emph{Algorithms for Big Data}. The authors thank Mark Ortmann for helpful discussions.

% \pagebreak
% \backmatter
\bibliographystyle{plain}
\bibliography{ProbabilisticQuadtree}

\clearpage
\appendix

%%%%%%%%%%%%%%%%%%%%%%%%%%%%%%
\section{Related Algorithmic Idea}
\label{sec:fast-batagelji-brandes}
Our approach was inspired by the following algorithm with optimal linear running time
for Erd\H{o}s-R\'{e}nyi graph generation~\cite{batagelj2005efficient}.

\begin{algorithm}[H]
\KwIn{number of vertices $n$, edge probability $0 < p < 1$}
\KwOut{$G = (\{0,...,n-1\}, E) \in \mathcal{G}(n,p)$} 
 $E$ = $\emptyset$\;
 $v$ = 1\;
 $w$ = -1\;
 \While{$v < n$}{
 draw $r\in [0,1)$ uniformly at random\;
 $w = w + 1 + \lfloor \frac{\log (1-r)}{\log (1-p)} \rfloor$\;
 \While{$w \geq v$ and $v < n$}{
 $w = w-v$\;
 $v = v + 1$\;
 }
 \If{$v < n$}
 {add $\{u,v\}$ to $E$}
 } 
 \caption{Efficient neighborhood generation for Erd\H{o}s-R\'{e}nyi graphs~\cite{batagelj2005efficient}.}
 \label{algo:fast-batagelji-brandes}
\end{algorithm}

%%%%%%%%%%%%%%%%%%%%%%%%%%%%%%%%%
\section{Proofs of Section~\ref{sec:baseline}}
\label{sec:basic-qt-proofs}

\subsection{Proof of Lemma~\ref{lemma:node-cell-probabilities}}
\label{sub:proof-lemma-node-cell-probabilities}
\begin{proof}
Due to the similarity of Lemma~\ref{lemma:node-cell-probabilities} to \cite[Lemma 1]{Looz2015HRG}, the proof follows a similar structure.
Let $C$ be a quadtree cell at level $k$, delimited by $\textnormal{min}_r$, $\textnormal{max}_r$, $\textnormal{min}_\phi$ and $\textnormal{max}_\phi$.
As stated in Section~\ref{sub:notation}, we require the point probability distribution to be rotationally invariant.
The probability that a point $p$ is in $C$ is then given by 
\begin{equation}
 \Pr(p \in C) = \frac{\max_{\phi} - \min_{\phi}}{2\pi} \cdot (J(\textnormal{max}_r) - J(\textnormal{min}_r))\label{eq:cell-probability-mass}.
\end{equation}
The boundaries of the children of $C$ are given by the splitting rules in Section~\ref{sub:qt-augment}.
\begin{align}
 \mathrm{mid}_\phi &:= \frac{\max_\phi+\min_\phi}{2}\label{eq:angular-split} \\
 \mathrm{mid}_r &:= (J|_{[0,R]})^{-1}\left(\frac{J(\mathrm{max}_r) + J(\mathrm{min}_r)}{2}\right)\label{eq:radial-split}
\end{align}
We proceed with induction over the depth $i$ of $C$.
Start of induction ($i$ = 0):
At depth 0, only the root cell exists and covers the whole disk.
Since $C = \mathbb{D}_R$, $\Pr(p \in C) = 1 = 4^{-0}$.

Inductive step ($i \rightarrow i+1$):
Let $C_i$ be a node at depth $i$.
$C_i$ is delimited by the radial boundaries $\mathrm{min}_r$ and $\mathrm{max}_r$, as well as the angular boundaries $\mathrm{min}_\phi$ and $\mathrm{max}_\phi$.
It has four children at depth $i+1$, separated by $\mathrm{mid}_r$ and $\mathrm{mid}_\phi$. Let $SW$ be the south west child of $C_i$.
With Eq.~(\ref{eq:cell-probability-mass}), the probability of $p\in SW$ is:
\begin{equation}
\Pr(p \in SW) = \frac{\mathrm{mid}_{\phi} - \min_{\phi}}{2\pi} \cdot \left(J\left(\mathrm{mid}_r\right) - J\left(\textnormal{min}_r\right)\right)
 \end{equation}.

Using Equations~(\ref{eq:angular-split}) and~(\ref{eq:radial-split}), this results in a probability of 
\begin{align}
\Pr(p \in SW) &= \frac{\frac{\max_\phi+\min_\phi}{2} - \min_{\phi}}{2\pi} \cdot \left(J\left((J|_{[0,R]})^{-1}\left(\frac{J(\mathrm{max}_r) + J(\mathrm{min}_r)}{2}\right)\right) - J(\textnormal{min}_r)\right)\\
\Pr(p \in SW) &= \frac{\frac{\max_\phi+\min_\phi}{2} - \min_{\phi}}{2\pi} \cdot \left(\frac{J(\mathrm{max}_r) + J(\mathrm{min}_r)}{2} - J(\textnormal{min}_r)\right)\\
\Pr(p \in SW) &= \frac{\frac{\max_\phi-\min_\phi}{2}}{2\pi} \cdot \left(\frac{J(\mathrm{max}_r) - J(\mathrm{min}_r)}{2}\right)\\
\Pr(p \in SW) &= \frac{1}{4} \frac{\max_\phi-\min_\phi}{2\pi} \cdot \left(J(\mathrm{max}_r) - J(\mathrm{min}_r)\right)\\
\end{align}
As per the induction hypothesis, $\Pr(p \in C_i)$ is $4^{-i}$ and $\Pr(p \in SW)$ is thus $\frac{1}{4}\cdot 4^{-i} = 4^{-(i+1)}$.
Due to symmetry when selecting $\mathrm{mid}_\phi$, the same holds for the south east child of $C_i$. Together, they contain half of the probability mass of $C_i$.
Again due to symmetry, the same proof then holds for the northern children as well.
\qed
\end{proof}

\subsection{Proof of Lemma~\ref{lemma:bound-number-quadtree-cells}}
\label{sub:proof-lemma-bound-number-quadtree-cells}
\begin{proof}
A quadtree $T$ containing $n$ points can have at most $n$ non-empty leaf cells. We can thus bound the total number of leaf cells in $T$ by limiting the number of empty cells.

An empty leaf cell occurs when a previous leaf cell $c$ is split.
% Let $c$ be a quadtree cell which is split.
We consider two cases, depending on how many of the children of $c$ contain points:

\textbf{Case 1:} All but one of the children of $c$ are empty and all points in $c$ are concentrated in one child.
We call a split of this kind an \emph{excess} split, since it did not result in dividing the points in $c$.

\textbf{Case 2:} At least two children of $c$ contain points.

The number of excess splits caused by a pair of points depends on the area they are clustered in.
Two sufficiently close points could cause a potentially unbounded number of excess splits.
However, due to Lemma~\ref{lemma:node-cell-probabilities}, each child cell contains a quarter of the probability mass of its parent cell.
Given two points $p,q$ in a cell which is split, they end up in different child cells with probability 3/4.

The expected number of excess splits for a point $p$ is thus at most\footnote{Note that the real number of excess splits might be lower, since a split might separate another point from $p$ and $q$.}
\begin{equation}
\sum_{i=0}^{\infty} i\cdot 4^{-i} = \frac{4}{9}. 
\end{equation}

Due to the linearity of expectations, the expected number of excess splits caused by $n$ points is then at most $4n/9$.
Each excess split causes four additional quadtree nodes, three of them are empty leaf cells.

If we remove all quadtree nodes caused by excess splits and reconnect the tree by connecting the remaining leaves
to their lowest unremoved ancestor, every inner node in the remaining tree $T'$ has at least two non-empty subtrees.
Since a binary tree with $n$ leaves has $O(n)$ inner nodes~\cite{Samet:2005:FMM:1076819} and the branching factor in $T'$ is at least two, $T'$ also contains at most $O(n)$ inner nodes.

Together with the expected $O(n)$ nodes caused by excess splits, this results in $O(n)$ nodes in $T$ in expectation.
\qed
\end{proof}

\subsection{Proof of Proposition~\ref{thm:quadtree-height}}
\label{sub:proof-quadtree-height}
\begin{proof}
We proved a similar lemma in previous work~\cite{Looz2015HRG}, for hyperbolic geometry only and a restricted family of probability distributions.
The requirement for that proof was that a given point $p$ has a probability of $4^{-i}$ to land in a given cell at depth $i$.
In Lemma~\ref{lemma:node-cell-probabilities}, we show that this requirement is fulfilled for the quadtrees used in this paper in both Euclidean and hyperbolic geometry.
We can thus reuse the proof of \cite[Lemma 2]{Looz2015HRG}, which we include for the purpose of self-containment:
\end{proof}

\subsubsection{Proof of \cite[Lemma 2]{Looz2015HRG}}
\newcommand{\variable}{n}
\newcommand{\binvariable}{n}

\begin{proof} % [of Lemma~\ref{thm:quadtree-height}]
In a complete quadtree, $4^i$ cells exist at depth $i$. For analysis purposes only, we construct such 
a complete but initially empty quadtree of height $k = 3\cdot\lceil \log_4(n)\rceil$, which has at least $n^3$ leaf cells.
As seen in Lemma~\ref{lemma:node-cell-probabilities}, a given point has an equal chance to land in each leaf cell.
Hence, we can apply \cite[Lemma~6]{Looz2015HRG} with each leaf cell being a bin and a point being a ball.
(The fact that we can have more than $n^3$ leaf cells only helps in reducing the average load.)
From this we can conclude that, for $n$ sufficiently large, no leaf cell of the current tree contains more than 1 point with high probability (whp).
Consequently, the total quadtree height does not exceed $k = 3\cdot\lceil \log_4(n)\rceil \in O(\log n)$ whp.

Let $T'$ be the quadtree as constructed in the previous paragraph, starting with a complete quadtree of height~$k$ and splitting leaves when their capacity is exceeded.
Let $T$ be the quadtree created in our algorithm, starting with a root node, inserting points and also splitting leaves when necessary, growing the tree downward.

Since both trees grow downward as necessary to accommodate all points, but $T$ does not start with a complete quadtree of height~$k$, the set of quadtree nodes in $T$ is a subset of the quadtree nodes in $T'$.
Consequently, the height of $T$ is bounded by $O(\log n)$ whp as well.
\qed	
\end{proof}

%%%%%%%%%%%%%%%%%%%%%%%%
\subsection{Proof of Proposition~\ref{lemma:independent-correct-probabilities}}
\label{sub:proof-lemma-indep}
\begin{proof}
Note that the hyperbolic [Euclidean] distances, which are mapped to probabilities according to the function $f$,
are calculated by Algorithm~\ref{algo:hyperbolic-distances}
[Algorithm~\ref{algo:Euclidean-polar-distances}], which are presented in Appendix~\ref{sec:distance} (together with their correctness proofs).
We continue the current proof with details for all three main steps.
\paragraph{Step 1:}
Between two points, the jumping width $\delta$ is given by Line~\ref{line:deltaskip}.
The probability that exactly $i$ points are skipped between two given candidates is $(1-\overline{b})^i\cdot \overline{b}$:
\begin{align}
 \Pr(i \leq \delta < i+1)&=\\
 \Pr(i \leq \ln(1-r)/\ln(1-\overline{b})<i+1)&=\label{eq:logratio}\\
 \Pr(\ln(1-r)\leq i\cdot\ln(1-\overline{b}) \wedge \ln(1-r)>(i+1)\cdot\ln(1-\overline{b}))&=\label{eq:loginequality}\\
 \Pr(1-(1-\overline{b})^i \leq r < 1-(1-\overline{b})^{i+1})&=\label{eq:last-probability}\\
 1-(1-\overline{b})^{i+1} - 1+(1-\overline{b})^i&=\label{eq:uniform-r-needed}\\
 (1-\overline{b})^i(1-(1-\overline{b})) &= \\
 (1-\overline{b})^i\cdot \overline{b}\label{eq:waiting-times}
\end{align}
Note that in Eq.~(\ref{eq:logratio}) the denominator is negative, thus the direction of the inequality is reversed in the transformation.
The transformation from Eq.~(\ref{eq:last-probability}) to Eq.~(\ref{eq:uniform-r-needed}) works since $r$ is uniformly distributed.

Following from Eq.~(\ref{eq:waiting-times}), the probability is $\overline{b}$ for $i=0$, and if a point is selected as a candidate, the subsequent point is selected with a probability of $\overline{b}$.

\paragraph{Step 2:}
Let $p_i$, $p_j$ and $p_l$ be points in a leaf, with $i<j<l$ and let $p_i$ be a neighbor candidate.
For now we assume that no other points in the same leaf are candidates and consider the probability that $p_l$ is selected as a candidate depending on whether the intermediate point $p_j$ is a candidate.

\textbf{Case 2.1:} If point $p_j$ is a candidate, then point $p_l$ is selected if $l-j$ points are skipped after selecting $p_j$.
Due to Step 1, this probability is $(1-\overline{b})^{l-j}\cdot \overline{b}$

\textbf{Case 2.2:} If point $p_j$ is \emph{not} a candidate, then point $p_l$ is selected if $l-i$ points are skipped after selecting $p_i$.
Given that $p_j$ is not selected, at least $j-i$ points are skipped.
The conditional probability is then:
\begin{align}
 \Pr(l-i \leq \delta < l-i+1 | \delta > j-i) &=\\
 \Pr(1-(1-\overline{b})^{l-i} < r < (1-(1-\overline{b})^{l-i+1}) | \delta > j-i)&=\\
 (1-\overline{b})^{l-i}\cdot \overline{b} / (1-\overline{b})^{j-i}&=\\
 (1-\overline{b})^{l-j}\cdot \overline{b}
\end{align}
As both cases yield the same result, the probability $\Pr(p_l \in \candidateset)$ is independent of whether $p_j$ is a candidate.

\paragraph{Step 3:}
Let $C$ be a leaf cell in which all points up to point $p_i$ are selected as candidates.
Due to Step 1, the probability that $p_{i+1} $ is also a candidate, meaning no points are skipped, is $(1-\overline{b})^0\cdot \overline{b} = \overline{b}$.
Due to Step 2, the probability of $p_{i+1}$ being a candidate is independent of whether $p_i$ is a candidate.
This can be applied iteratively until the beginning of the leaf cell, yielding a probability of $\overline{b}$ for $p_i$ being a candidate, independent of whether other points are selected.

A neighbor candidate $p_i$ is accepted as a neighbor with probability $f(\dist(p_i, q))/\overline{b}$ in Line~\ref{line:candidate-confirmation}.
Since $\overline{b}$ is an upper bound for the neighborhood probability, the acceptance ratio is between 0 and 1.
The probability for a point $p$ to be in the probabilistic neighborhood computed by Algorithm~\ref{algo:quadnode-probabilistic} is thus:
\begin{align}
 \Pr(p\in \neighborset(q,f)) &=\\
 \Pr(p\in \neighborset(q,f) \wedge p\in \candidateset(q,f)) &=\\
 \Pr(p \in \neighborset(q,f) | p \in \candidateset(q,f)) \cdot \Pr(p\in \candidateset(q,f))&=\\
 f(\dist(p, q))/\overline{b} \cdot \overline{b}&=\\
 f(\dist(p, q))
\end{align}
\qed
\end{proof}

%%%%%%%%%%%%%%%%%%%%%%%%
\subsection{Proof of Proposition~\ref{cor:basic-time}}
\label{sec:proof-cor-basic-time}
\begin{proof}
The total time complexity of the query algorithm is determined by the number of recursive calls (Line~\ref{line:calling-child}) and the number of loop iterations (Line~\ref{line:loop-iteration}).
During tree traversal, one recursive call is made for each examined quadtree node.
During examination of a leaf, one loop iteration happens for every examined candidate.
Let the set of neighbors ($\neighborset(q,f)$), candidates ($\candidateset(q,f)$) and examined cells ($\cellset(q, f)$) be as defined in Section~\ref{sub:notation}.
The time complexity of the query is then in $\Theta(|\candidateset(q,f)| + |\cellset(q,f)|)$.

All cells of the quadtree are examined, thus $\cellset(q,f) = \cellset(T)$.
If the cells are split using the medians of point positions, then no leaf cell is empty and the tree contains at most $n$ cells.
If cells are split using the theoretical probability distributions, the tree contains at most $O(n)$ cells in expectation due to Lemma~\ref{lemma:bound-number-quadtree-cells}.
It follows that the number of examined cells is in $\Theta(n)$ in expectation.
Since the candidate set is a subset of the point set, the expected number of candidates is at most $n$.
The query time complexity is then in $O(n) +  \Theta(|\cellset(T)|$ = $\Theta(n)$ in expectation.
\qed
\end{proof}

%%%%%%%%%%%%%%%%%%%%%%%%%%%%%%%%%
\newcommand{\lE}{\ensuremath{\mathrm{leftExtremum}}}
\newcommand{\rE}{\ensuremath{\mathrm{rightExtremum}}}

% \newpage

\subsection{Distance between Quadtree Cell and Point}
\label{sec:distance}
To calculate the upper bound $\overline{b}$ used in Algorithm~\ref{algo:quadnode-probabilistic}, we need a lower bound for the distance between the query point $q$ and any point in a given quadtree cell.
Since the quadtree cells are polar, the distance calculations might be unfamiliar and we show and prove them explicitly.
For the hyperbolic case, the distance calculations are shown in Algorithm~\ref{algo:hyperbolic-distances} and proven in Lemma~\ref{lemma:hyperbolic-distances}.
The Euclidean calculations are shown in Algorithm~\ref{algo:Euclidean-polar-distances} and proven in Lemma~\ref{lemma:Euclidean-polar-distances}.

%%%%%%%%%%%%%%%%%%%%%%%%%%%%%%%%%
\LinesNumbered
\begin{algorithm}
\KwIn{quadtree cell $C$ = ($\text{min}_r$, $\text{max}_r$, $\text{min}_\phi$, $\text{max}_\phi$), query point $q = (\phi_q, r_q)$}
\KwOut{infimum and supremum of hyperbolic distances $q$ to interior of $C$}
\tcc{start with corners of cell as possible extrema}
cornerSet = \{($\text{min}_\phi$, $\text{min}_r$), ($\text{min}_\phi$, $\text{max}_r$), ($\text{max}_\phi$, $\text{min}_r$), ($\text{max}_\phi$, $\text{max}_r$)\}\;
a = $\cosh(r_q)$\;
b = $\sinh{r_q}\cdot\cos(\phi_q-\text{min}_\phi)$\;
\tcc{Left/Right boundaries}
leftExtremum = $\frac{1}{2}\ln\left(\frac{a+b}{a-b}\right)$\label{line:left-extremum}\;
\If{$\text{min}_r < \lE < \text{max}_r$}{
add $(\text{min}_\phi, \lE)$ to cornerSet\;
}

b = $\sinh{r_q}\cdot\cos(\phi_q-\text{max}_\phi)$\;
rightExtremum = $\frac{1}{2}\ln\left(\frac{a+b}{a-b}\right)$\label{line:right-extremum}\;
\tcc{Top/bottom boundaries}
\If{$\text{min}_r < \rE < \text{max}_r$}{
add $(\text{max}_\phi, \rE)$ to cornerSet\;
}

\If{$\text{min}_\phi < \phi_q \text{max}_\phi$}{
add $(\phi_q, \text{min}_r)$ and $(\phi_q, \text{max}_r)$ to cornerSet\;
}
$\phi_{\text{mirrored}} = \phi_q + \pi \mod 2\pi$\;

\If{$\text{min}_\phi < \phi_{\text{mirrored}} < \text{max}_\phi$}{
add $(\phi_{\text{mirrored}}, \text{min}_r)$ and $(\phi_{\text{mirrored}}, \text{max}_r)$ to cornerSet\;
}
\tcc{If point is in cell, distance is zero:}
\If{$\text{min}_\phi \leq \phi_q < \text{max}_\phi \text{ AND }\text{min}_r \leq r_q < \text{max}_r$}{
infimum = 0\;
}
\Else{
infimum = $\min_{e \in \text{cornerSet}} \text{dist}_\hyperbolic{} (q, e)$\;
}
supremum = $\max_{e \in \text{cornerSet}} \text{dist}_\hyperbolic{} (q, e)$\;
\Return{infimum, supremum};

 \caption{Infimum and supremum of distance in a hyperbolic polar quadtree}
 \label{algo:hyperbolic-distances}
\end{algorithm}
%%%%%%%%%%%%%%%%%%%%%%%%%%%%%%%%%%%%%%%%%%%%%%%%%%%%%%%%%%%%%%%%%%

\begin{algorithm}
\KwIn{quadtree cell $C$ = ($\text{min}_r$, $\text{max}_r$, $\text{min}_\phi$, $\text{max}_\phi$), query point $q = (\phi_q, r_q)$}
\KwOut{infimum and supremum of Euclidean distances $q$ to interior of $C$}
\tcc{start with corners of cell as possible extrema}
 cornerSet = \{($\text{min}_\phi$, $\text{min}_r$), ($\text{min}_\phi$, $\text{max}_r$), ($\text{max}_\phi$, $\text{min}_r$), ($\text{max}_\phi$, $\text{max}_r$)\}\;
\tcc{Left/Right boundaries}
\lE = $r_q\cdot\cos(\text{min}_\phi - \phi_q)$\;\label{line:left-extremum-Euclidean}
\If{$\text{min}_r < \lE < \text{max}_r$}{
add $(\text{min}_\phi, \lE)$ to cornerSet\;
}

\rE = $r_q\cdot\cos(\text{max}_\phi - \phi_q)$\;\label{line:right-extremum-Euclidean}
\If{$\text{min}_r < \rE < \text{max}_r$}{
add $(\text{max}_\phi, \rE)$ to cornerSet\;
}

\tcc{Top/bottom boundaries}
\If{$\text{min}_\phi < \phi_q < \text{max}_\phi$}{
add $(\phi_q, \text{min}_r)$ and $(\phi_q, \text{max}_r)$ to cornerSet\;
}
$\phi_{\text{mirrored}} = \phi_q + \pi \mod 2\pi$\;

\If{$\text{min}_\phi < \phi_{\text{mirrored}} < \text{max}_\phi$}{
add $(\phi_{\text{mirrored}}, \text{min}_r)$ and $(\phi_{\text{mirrored}}, \text{max}_r)$ to cornerSet\;
}
\tcc{If point is in cell, distance is zero:}
\If{$\text{min}_\phi \leq \phi_q < \text{max}_\phi \text{ AND }\text{min}_r \leq r_q < \text{max}_r$}{
infimum = 0\;
}
\Else{
infimum = $\min_{e \in \text{cornerSet}} \text{dist}_\hyperbolic{} (q, e)$\;
}
supremum = $\max_{e \in \text{cornerSet}} \text{dist}_\hyperbolic{} (q, e)$\;
\Return{infimum, supremum};
 \caption{Infimum and supremum of distance in a Euclidean polar quadtree}
 \label{algo:Euclidean-polar-distances}
\end{algorithm}
%%%%%%%%%%%%%%%%%%%%%%%%%%%%%%%%%%%%%%%%%%%%%%%%%%%%%%%%%%%%%%%%%

\pagebreak 

\begin{lemma}
 Let $C$ be a quadtree cell and $q$ a point in hyperbolic space.
 The first value returned by Algorithm~\ref{algo:hyperbolic-distances} is the distance of $C$ to $q$.
 \label{lemma:hyperbolic-distances}
\end{lemma}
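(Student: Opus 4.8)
The plan is to show that the value returned as \texttt{infimum} equals $\inf_{p\in C}\dist_{\hyperbolic}(q,p)$. First I would handle the trivial case: if $q$ lies in $C$, the infimum is $0$, which is exactly what the algorithm returns through its membership test. For $q\notin C$, the closure $\bar C$ is compact and $\dist_{\hyperbolic}(q,\cdot)$ is continuous, so the infimum is attained at some $p^\ast$; moreover $p^\ast$ must lie on the boundary $\partial C$, because from any interior point one can move a short distance along the geodesic toward $q$ and strictly decrease the distance. Hence it suffices to minimize over $\partial C$.

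Next I would decompose $\partial C$ into its four one-dimensional pieces: the radial segments $\phi=\min_\phi$ and $\phi=\max_\phi$ (with $r\in[\min_r,\max_r]$), and the arcs $r=\min_r$ and $r=\max_r$ (with $\phi\in[\min_\phi,\max_\phi]$). On each piece the restricted distance is a smooth function of one parameter, so its minimum is attained either at an endpoint -- a corner of the cell, all four of which seed \texttt{cornerSet} -- or at an interior critical point. The heart of the proof is to check that the additional points inserted by the algorithm are exactly the interior critical points of these restrictions, each inserted precisely when it lies in the admissible range.

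For a radial segment I would invoke the hyperbolic law of cosines: for a point $(\min_\phi,r)$ it gives $\cosh\dist_{\hyperbolic}(q,p)=a\cosh r-b\sinh r$ with $a=\cosh r_q$ and $b=\sinh r_q\cos(\phi_q-\min_\phi)$, matching the variables $a,b$ in the algorithm. Since $\cosh$ is increasing, minimizing the distance is equivalent to minimizing $a\cosh r-b\sinh r$; setting the derivative $a\sinh r-b\cosh r$ to zero yields $\tanh r=b/a$, that is, $r=\tfrac{1}{2}\ln\frac{a+b}{a-b}$, which is precisely the formula for \lE\ (and symmetrically \rE). The guard $\min_r<\lE<\max_r$ restricts its use to a genuine interior extremum. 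For an arc I would write $\cosh\dist_{\hyperbolic}(q,p)=\cosh r_q\cosh r-\sinh r_q\sinh r\cos(\phi_q-\phi)$ and differentiate in $\phi$, obtaining the condition $\sin(\phi_q-\phi)=0$, so $\phi\in\{\phi_q,\ \phi_q+\pi \bmod 2\pi\}$. The choice $\phi=\phi_q$ maximizes $\cos(\phi_q-\phi)$ and thus minimizes the distance, which is why $(\phi_q,\min_r)$ and $(\phi_q,\max_r)$ are added for the infimum, while the mirrored angle furnishes the maximizer used for the supremum and is harmless to the $\min$.

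Finally I would conclude that \texttt{cornerSet} contains the true minimizer $p^\ast$, so the minimum of $\dist_{\hyperbolic}(q,\cdot)$ over this finite set equals the distance of $C$ to $q$, as claimed. The main obstacle I expect is the radial critical-point computation -- justifying the reduction to minimizing $\cosh$ of the distance and deriving the closed form $\tfrac{1}{2}\ln\frac{a+b}{a-b}$ -- together with the bookkeeping that classifies each critical point as a minimum or a maximum, so that the single candidate set serves both \texttt{infimum} via $\min$ and \texttt{supremum} via $\max$. Edge cases such as $r_q=0$ (whence $b=0$ and the radial extremum degenerates) or the sign of $a-b$ (which stays positive since $|b|<\cosh r_q=a$) should be verified but are routine.
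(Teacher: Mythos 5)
Your proposal is correct and follows essentially the same route as the paper's proof: reduce to the boundary (after the trivial $q\in C$ case), split it into the two radial segments and two arcs, and compute the interior critical points -- $r=\tfrac{1}{2}\ln\frac{a+b}{a-b}$ for fixed angle and $\phi=\phi_q \bmod \pi$ for fixed radius -- so that \texttt{cornerSet} provably contains the minimizer. Your additions (the geodesic argument for why the minimizer lies on $\partial C$, and the observation that $a-b>0$ always holds so the logarithm is well-defined) are small refinements of the paper's argument rather than a different approach.
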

%%%%%%%%%%%%%%%%%%%%%%%%%%%%%%%%%%%%%%%%%%%%%%%%%%%%%%%%%%%%%%%%%%%

\begin{proof}
When $q$ is in $C$, the distance is trivially zero.
Otherwise, the distance between $q$ and $C$ can be reduced to the distance between $q$ and the boundary of $C$, $\delta C$:
\begin{equation}
\text{dist}_\hyperbolic{} (C, q) = \text{dist}_\hyperbolic{} (\delta C, q) = \inf_{p \in \delta C} \text{dist}_\hyperbolic{} (p, q) 
\end{equation}
Since the boundary is closed, this infimum is actually a minimum:
\begin{equation}
\text{dist}_\hyperbolic{} (C, q) = \inf_{p \in \delta C} \text{dist}_\hyperbolic{} (p, q) = \min_{p \in \delta C} \text{dist}_\hyperbolic{} (p, q) 
\end{equation}
The boundary of a quadtree cell consists of four closed curves:
\begin{itemize}
 \item left: $\{(\text{min}_\phi, r) |  \text{min}_r \leq r \leq \text{max}_r \} $
 \item right: $\{(\text{max}_\phi, r) |  \text{min}_r \leq r \leq \text{max}_r \} $
 \item lower: $\{(\phi, \text{min}_r) | \text{min}_\phi \leq \phi \leq \text{max}_\phi \} $
 \item upper: $\{(\phi, \text{max}_r) | \text{min}_\phi \leq \phi \leq \text{max}_\phi \} $
\end{itemize}
We write the distance to the whole boundary as a minimum over the distances to its parts:
\begin{equation}
 \text{dist}_\hyperbolic{} (\delta C, q) = \min_{A \in \{\text{left, right, lower, upper} \}} \text{dist}_\hyperbolic{} (A, q)
\end{equation}

All points on an angular boundary curve $A$ have the same angular coordinate $\phi_A$.
Let $d_A(r) = \mathrm{acosh}(\cosh(r)\cosh(r_q) - \sinh(r)\sinh(r_q) \cos(\phi_q - \phi_A))$ for a fixed point $q$.
The distance $\text{dist}_\hyperbolic{} (A, q)$ can then be reduced to:
\begin{align}
 \text{dist}_\hyperbolic{} (A, q) &= \min_{\text{min}_r \leq r \leq \text{max}_r} d_A(r)\\
\end{align}
The minimum of $d_A$ on $A$ is the minimum of $d_A(\text{min}_r),$  $d_A(\text{max}_r)$ and the value at possible extrema.
To find the extrema, we define a function $g(r) = \cosh(r)\cosh(r_q) - \sinh(r)\sinh(r_q) \cos(\phi_q - \phi_A)$.
Since $\mathrm{acosh}$ is strictly monotone, $g(r)$ has the same extrema as $d_A(r)$.

The factors $\cosh(r_q)$ and $\sinh(r_q) \cos(\phi_q - \phi_A)$ do not depend on $r$, to increase readability we substitute them with the constants $a$ and $b$:
\begin{align}
 a &= \cosh(r_q)\\
 b &= \sinh(r_q) \cos(\phi_q - \phi_A)\\
 d_A(r) &= \mathrm{acosh}(\cosh(r)\cdot a - \sinh(r)\cdot b)\\
 g(r) &= \cosh(r)\cdot a - \sinh(r)\cdot b
\end{align}
The derivative of $g$ is thus:
\begin{equation}
g'(r) = \sinh(r)\cdot a - \cosh(r)\cdot b = \frac{e^r-e^{-r}}{2}\cdot a - \frac{e^r+e^{-r}}{2}\cdot b
\end{equation}
With some transformations, we get the roots of $g'(r)$:
\paragraph*{Case $a=b$:}
 \begin{align}
  g'(r) &= 0 \Leftrightarrow\\
  \frac{e^r-e^{-r}}{2}\cdot a &= \frac{e^r+e^{-r}}{2}\cdot a\\
  e^r-e^{-r} &= e^r+e^{-r}\\
  -e^{-r} &= e^{-r}\\
  e^{-r} &= 0\\
 \end{align}
For $a=b$, $d_A$ has no extrema in $\mathbb{R}$.

\paragraph*{ $a\not=b$:}

\begin{align}
g'(r) &= 0 \Leftrightarrow\\
\frac{e^r-e^{-r}}{2}\cdot a &= \frac{e^r+e^{-r}}{2}\cdot b\Leftrightarrow\\
a e^r-ae^{-r} &=  be^r+be^{-r}\Leftrightarrow\\
(a-b)e^r - (a+b)e^{-r} &= 0\Leftrightarrow\\
(a-b)e^r &= (a+b)e^{-r}\Leftrightarrow\\
e^r &= \frac{a+b}{a-b}e^{-r}\Leftrightarrow\\
e^{2r} &= \frac{a+b}{a-b}\Leftrightarrow\\
2r &= \ln\left(\frac{a+b}{a-b}\right)\Leftrightarrow\\
r &= \frac{1}{2}\ln\left(\frac{a+b}{a-b}\right)\label{eq:left-right-extremum}
\end{align}
For  $a\not=b$, $d_A$ has a single extremum at $\frac{1}{2}\ln\left(\frac{a+b}{a-b}\right)$.
This extremum is calculated for both angular boundaries in Lines \ref{line:left-extremum} and \ref{line:right-extremum} of Algorithm~\ref{algo:hyperbolic-distances}.

If $d(r)$ has an extremum $x$ in $A$, the minimum of $d_A(r)$ on $A$ is $\min \{d_A(\text{min}_r)$,  $d_A(\text{max}_r)$, $d_A(x)\}$,
otherwise it is $\min \{d_A(\text{min}_r)$,  $d_A(\text{max}_r)\}$.

\vspace{1\baselineskip}
A similar approach works for the radial boundary curves. Let $B$ be a radial boundary curve at radius $r_B$ and angular bounds $\text{min}_\phi$ and $\text{max}_\phi$.
Let $d_B(\phi)$ be the distance to $q$ restricted to radius $r_B$.
\begin{align}
d_B &: [0,2\pi] \rightarrow \mathbb{R}\\
 d_B(\phi) &= \mathrm{acosh}(\cosh(r_B)\cosh(r_q) - \sinh(r_B)\sinh(r_q) \cos(\phi_q - \phi))
\end{align}
Similarly to the angular boundaries, we define some constants and a function $g(\phi)$ with the same extrema as $d_B$:
\begin{align}
  a &= \cosh(r_B)\cosh(r_q)\\
  b &= \sinh(r_B)\sinh(r_q)\\
 g(\phi) &= a - b \cos(\phi_q - \phi)
\end{align}

\paragraph*{Case: $b = 0$:}
\begin{align}
 b &= \sinh(r_B)\sinh(r_q) = 0 \Leftrightarrow\\
 g(\phi) &= a 
\end{align}
Since $g$ is constant, no extrema exist.

\paragraph*{Case: $b \not= 0$:}
We obtain the extrema with some transformations:
\begin{align}
 g'(\phi) &= -b \sin(\phi_q - \phi)\\
 g'(\phi) &= 0 \Leftrightarrow\\
 \sin(\phi_q - \phi) &= 0 \Leftrightarrow\\
 \phi &= \phi_q \mod \pi
\end{align}
The distance function $d_B(\phi)$ thus has two extrema.

The minimum of $d_B(r)$ on $B$ is then:
\begin{equation}
\min_{r \in B} d_B(r) = \min \{d_B(\text{min}_r), d_B(\text{max}_r)\} \cup \{d_B(\phi) |  \text{min}_\phi \leq \phi \leq \text{max}_\phi \wedge \phi = \phi_q \mod \pi \}
\end{equation}

The distance $\text{dist}_\hyperbolic{} (C, q)$ can thus be written as the minimum of four to ten point-to-point distances. 
Algorithm~\ref{algo:hyperbolic-distances} collects the arguments for these distances in the variable cornerSet and returns the distance minimum as the first return value.
\qed
\end{proof}
%%%%%%%%%%%%%%%%%%%%%%%%%%%%%%%%%%%%%%%%%%%%%%%%%%%%%%%%%%%%%%%%%%

\begin{lemma}
 Let $T$ be a polar quadtree in Euclidean space, $c$ a quadtree cell of $T$ and $q$ a point in Euclidean space.
 The first value returned by Algorithm~\ref{algo:Euclidean-polar-distances} is the distance of $c$ to $q$.
 \label{lemma:Euclidean-polar-distances}
\end{lemma}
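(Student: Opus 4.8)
The plan is to mirror the proof of Lemma~\ref{lemma:hyperbolic-distances}, replacing the hyperbolic law of cosines by its Euclidean counterpart, since Algorithm~\ref{algo:Euclidean-polar-distances} has exactly the same structure as its hyperbolic sibling. First I would dispose of the trivial case: if $q$ lies inside $C$ the distance is zero, which is exactly what the explicit containment test returns. Otherwise $\dist_\Euclidean(C,q)$ equals the minimum distance from $q$ to the boundary $\partial C$, and $\partial C$ decomposes into four closed curves — two angular boundaries at $\min_\phi$ and $\max_\phi$, and two radial boundaries at $\min_r$ and $\max_r$. Since the distance to the whole boundary is the minimum of the distances to the four pieces, it suffices to minimize over each piece separately and then take the overall minimum.

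The only computation that differs from the hyperbolic case is the closed form of the Euclidean distance for polar coordinates, for which I would use $\dist_\Euclidean(q,(\phi,r))^2 = r^2 + r_q^2 - 2\,r\,r_q\cos(\phi_q-\phi)$. On an angular boundary $A$ at fixed angle $\phi_A$ this is a function of $r$ alone, $g(r) = r^2 + r_q^2 - 2\,r\,r_q\cos(\phi_q-\phi_A)$; as the square root is monotone, minimizing $\dist_\Euclidean$ over $A$ amounts to minimizing $g$. Differentiating, $g'(r) = 2r - 2 r_q\cos(\phi_q-\phi_A)$ vanishes exactly at $r = r_q\cos(\phi_q-\phi_A)$, and $g''(r)=2>0$ shows this is the unique minimum (geometrically, the foot of the perpendicular from $q$ onto the ray at angle $\phi_A$). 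This is precisely the value computed in Lines~\ref{line:left-extremum-Euclidean} and~\ref{line:right-extremum-Euclidean}, and the algorithm adds the corresponding corner only when it falls into the open interval $(\min_r,\max_r)$.

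For a radial boundary $B$ at fixed radius $r_B$ the distance becomes a function of $\phi$, with $g(\phi) = r_B^2 + r_q^2 - 2\,r_B\,r_q\cos(\phi_q-\phi)$ and hence $g'(\phi) = -2\,r_B\,r_q\sin(\phi_q-\phi)$, which vanishes exactly when $\phi \equiv \phi_q \pmod{\pi}$. The two candidate angles are $\phi_q$, where $\cos(\phi_q-\phi)=1$ gives the minimum $(r_B-r_q)^2$, and $\phi_{\text{mirrored}} = \phi_q + \pi \bmod 2\pi$, where $\cos(\phi_q-\phi)=-1$ gives the maximum; the algorithm inserts each of these as a corner precisely when it lies in $(\min_\phi,\max_\phi)$. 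Collecting the four cell corners together with these at most six interior extrema, the cornerSet contains every point that can realize the extremal distance on any boundary piece, so $\dist_\Euclidean(C,q)$ is the minimum of these four-to-ten point-to-point distances, which is the first returned value.

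I do not expect a genuine obstacle here, as the argument is structurally identical to the hyperbolic one; the only points requiring care are (i) checking the sign of $g''$ so that the angular critical point is confirmed to be a minimum rather than a maximum, and (ii) noting that including $\phi_{\text{mirrored}}$ is harmless for the infimum — it contributes only to the supremum — so that taking the \emph{minimum} and the \emph{maximum} over the same cornerSet yields both correct return values simultaneously.
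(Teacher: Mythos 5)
Your proposal is correct and follows essentially the same route as the paper's own proof: dispose of the trivial containment case, reduce to the boundary, minimize the squared polar distance on each of the four boundary curves via its derivative, and observe that the algorithm's cornerSet collects the four corners plus all interior critical points, so the minimum over it is $\dist_\Euclidean(C,q)$. Your two extra touches --- verifying $g''>0$ for the angular boundaries and noting that $\phi_{\text{mirrored}}$ only matters for the supremum --- are refinements the paper leaves implicit, but they do not change the argument.
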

%%%%%%%%%%%%%%%%%%%%%%%%%%%%%%%%%%%%%%%%%%%%%%%%%%%%%%%%%%%%%%%%%%%

\begin{proof}
The general distance equation for polar coordinates in Euclidean space is
\begin{equation}
 f(r_p, r_q, \phi_p, \phi_q) = \sqrt{r_p^2 + r_q^2 -2 r_p r_q \cos(\phi_p - \phi_q)}
 \label{eq:distance-equation}
\end{equation}

If the query point $q$ is within $C$, the distance is zero.
Otherwise, the distance between $q$ and $C$ is equal to the distance between $q$ and the boundary of $C$.
We consider each boundary component separately and derive the extrema of the distance function.

\paragraph{Radial boundary.}
When considering the radial boundary, everything but one angle is fixed:
\begin{equation}
  f(\phi_p) = \sqrt{r_p^2 + r_q^2 -2 r_p r_q \cos(\phi_p - \phi_q)}
  \label{eq:distance-only-angle}
\end{equation}
Since the distance is positive and the square root is a monotone function, the extrema of the previous function are at the same values as the extrema of its square $g(\phi)$:
\begin{equation}
  g(\phi_p) =r_p^2 + r_q^2 -2 r_p r_q \cos(\phi_p - \phi_q)
  \label{eq:distance-only-angle-squared}
\end{equation}
We set the derivative to zero to find the extrema:
\begin{align}
  g'(\phi_p) &= 0 \Leftrightarrow\\
  2 r_p r_q \sin(\phi_p - \phi_q)\cdot (\phi_p - \phi_q) &= 0\\
  \phi_p = \phi_q \mod \pi
  \label{eq:distance-only-angle-squared-derivative}
\end{align}

\paragraph{Angular boundary.}
Similar to the radial boundary, we fix everything but the radius:
\begin{equation}
  f(r_p) = \sqrt{r_p^2 + r_q^2 -2 r_p r_q \cos(\phi_p - \phi_q)}
  \label{eq:distance-only-radius}
\end{equation}

Again, we define a helper function with the same extrema:
\begin{equation}
  g(r_p) =r_p^2 + r_q^2 -2 r_p r_q \cos(\phi_p - \phi_q)
  \label{eq:distance-only-radius-squared}
\end{equation}
We set the derivative to zero to find the extrema:
\begin{align}
  g'(r_p) &= 0 \Leftrightarrow\\
  2r_p - 2r_q\cos(\phi_p - \phi_q) &= 0\Leftrightarrow\\
  r_p &= r_q\cos(\phi_p - \phi_q)\Rightarrow\\
  g(r_p) &= r_p^2 + r_q^2 -2 r_p^2\\ &= r_q^2 - r_p^2\\ &= r_q^2(1-\cos(\phi_p-\phi_q))
  \label{eq:distance-only-angle-radius-derivative}
\end{align}

An extremum of $f$ on the boundary of cell $c$ is either at one of its corners or at the points derived in Eq.~(\ref{eq:distance-only-angle-squared-derivative}) or Eq.~(\ref{eq:distance-only-angle-radius-derivative}).
If $q\not\in c$, the minimum over these points and the corners, as computed by Algorithm~\ref{algo:Euclidean-polar-distances}, is the minimal distance between $q$ and any point in $c$.
If $q$ is contained in $c$, the distance is trivially zero.
\qed
\end{proof}

\pagebreak

%%%%%%%%%%%%%%%%%%%%%%%%%%%%%%%%%%%%%%%%%%%
\section{Algorithm maybeGetKthElement, used in Section~\ref{sec:subtree-aggr}}
\label{sec:maybe-get-kth-element}
\begin{algorithm}[H]
\KwIn{query point $q$, function $f$, index $k$, bound $\overline{b}$, subtree $S$}
\KwOut{$k$th point of $S$ or empty set}
\If{$S$.isLeaf()}{
acceptance = $f(\text{dist}(q,\mathrm{S.points}[k]))/\overline{b} $\;
\If{$\mathit{1-rand()} <$  acceptance}{
\Return $\mathrm{S.points}[k]$\;
}
\Else{
\Return $\emptyset$\;
}
}
\Else{\tcc{Recursive call}
offset := 0\;
\For{child $\in$ $S$.children}{
\If{$k - \mathrm{offset} < |\mathrm{child}|$}{\tcc{|child| is the number of points in \emph{child}}
\Return maybeGetKthElement($q$, $f$, $k$ - offset, $\overline{b}$, child)\;
}
offset += $|\mathrm{child}|$\;
}
}
 \caption{maybeGetKthElement}
 \label{algo:maybe-get-kth-element}
\end{algorithm}
\clearpage

%%%%%%%%%%%%%%%%%%%%%%%%%%%%%%%%%%%%%%%%%%
%\section{Proofs of Section~\ref{sec:subtree-aggr}}
%\label{sec:proofs-sec-sub-aggr}

\section{Proof of Theorem~\ref{lemma:subtree-aggregation-complexity}}
\label{sec:proof-subtree-aggregation-complexitx}
\begin{proof}
Similar to the baseline algorithm, the complexity of the faster query is determined by the number of recursive calls and the total number of loop iterations across the calls.
The first corresponds to the number of examined quadtree cells, the second to the total number of candidates.
With subtree aggregation, we obtain improved bounds: Lemma~\ref{lemma:subtree-aggregation-candidates} limits the number of candidates to $O(|\neighborset(q,f)| + \sqrt{n})$ whp, while Lemma~\ref{lemma:subtree-aggregation-cells} bounds the number of examined quadtree cells to $O((|\neighborset(q,f)| + \sqrt{n})\log n)$ whp.
Together, this results in a query complexity of $O((|\neighborset(q,f)| + \sqrt{n}) \log n)$ whp.
\qed
\end{proof}

% \paragraph*{Notation}
\label{subsubsec:subtree-complexity-notation}
For the lemmas required in the proof of Theorem~\ref{lemma:subtree-aggregation-complexity} we need to introduce some
notation:
Let $T$ be a quadtree with $n$ points, $S$ a subtree of $T$ containing $s$ points, $q$ a query point and $f$ a function mapping distances to probabilities.
The set of neighbors ($\neighborset(q,f)$), candidates ($\candidateset(q,f)$) and examined cells ($\cellset(q, f)$) are defined as in Section~\ref{sub:notation}.

For the analysis we divide the space around the query point $q$ into infinitely many bands, based on the probabilities given by $f$.
A point $p\in\pointset$ is in band $i$ exactly if the probability of it being a neighbor of $q$ is between $2^{-(i+1)}$ and $2^{-i}$:
\[
 p \in \text{band }i  \Leftrightarrow 2^{-(i+1)} < f(\text{dist}(p,q)) \leq 2^{-i}
\]
Based on these bands, we divide the previous sets into infinitely many subsets:
\begin{itemize}
 \item $\pointset(q, f, i) := \{v \in \pointset | 2^{-(i+1)} < f(\text{dist}(v,q)) \leq 2^{-i}\}$
 \item $\neighborset(q, f, i) := \neighborset(q, f) \cap \pointset(q,f,i)$
 \item $\candidateset(q, f, i) := \candidateset(q, f) \cap \pointset(q,f,i)$
 \item $\cellset(q, f, i) := \{c \in \cellset(q, f) | 2^{-(i+1)} < f(\text{dist}(c,q)) \leq 2^{-i}\}$
\end{itemize}

Note that for fixed $n$, all but at most finitely many of these sets are empty.
We call the quadtree cells in $\cellset(q,f,i)$ to be \emph{anchored} in band $i$.
The region covered by a quadtree cell is in general not aligned with the probability bands, thus a quadtree cell anchored in band $i$ ($c \in \cellset(q,f,i)$) may contain points from higher bands (i.e. with lower probabilities).

We continue with two auxiliary results used in Lemma~\ref{lemma:subtree-aggregation-candidates}:
Lemma~\ref{lemma:half-prob-set} helps in bounding the number of candidates that are in the same band as their (virtual or original) quadtree cell is anchored in.
Lemma~\ref{lemma:ring-root} is used to bound the number of points in a higher band than their original quadtree cell.

\newcommand{\uaqcset}{\ensuremath{\Upsilon}}
\newcommand{\subtreeset}{\ensuremath{\mathcal{\Psi}}\xspace}
\begin{lemma}
Let $n$ be a natural number and let $A$, $B$ be sets with $A \subseteq B, |B| \leq n$ and the following property: $\Pr(b \in A) \geq 0.5$, $\forall b \in B$.
Further, let the probabilities for membership in $A$ be independent.
Then, the number of points in $B$ is in $O(|A| + \log n)$ with probability at least $1-1/n^3$.
\label{lemma:half-prob-set}
\end{lemma}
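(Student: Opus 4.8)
The plan is to recognize $X := |A| = \sum_{b \in B} \mathbf{1}[b \in A]$ as a sum of independent Bernoulli indicators, each with success probability at least $1/2$, and to control $|B|$ from below by $X$ via a multiplicative Chernoff bound. Writing $m := |B| \le n$ for the deterministic size of $B$ and $\mu := \Ex{X}$, linearity of expectation gives $\mu \ge m/2$. The intuition is that, since each element of $B$ enters $A$ with probability at least one half, $A$ captures a constant fraction of $B$ and hence $m = O(|A|)$; the Chernoff bound promotes this heuristic to a high-probability statement, and the only regime in which it can fail is when $m$ is so small that concentration does not yet take hold -- exactly the regime absorbed by the additive $\log n$ term.

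Concretely, I would fix a constant $\gamma$ (taking $\gamma = 48$ suffices) and split into two cases. If $m \le \gamma \ln n$, then $|B| = m \in O(\log n) \subseteq O(|A| + \log n)$ holds deterministically, so there is nothing to prove and the failure probability in this case is $0$. If instead $m > \gamma \ln n$, I would apply the lower-tail Chernoff bound $\Pro{X \le (1-\delta)\mu} \le \exp(-\delta^2\mu/2)$ with $\delta = 1/2$. Since $\mu \ge m/2$ implies $m/4 \le \mu/2$, the event $\{X < m/4\}$ is contained in $\{X \le \mu/2\}$, whence
\begin{equation}
\Pro{X < m/4} \le \exp(-\mu/8) \le \exp(-m/16) < \exp(-3\ln n) = n^{-3},
\end{equation}
the final inequality using $m > 48\ln n$. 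On the complementary event, of probability at least $1 - n^{-3}$, we have $m \le 4X = 4|A|$, \iec $|B| = O(|A|)$.

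Combining the two cases yields $|B| \in O(|A| + \log n)$ with probability at least $1 - 1/n^3$, as claimed. I expect the only genuinely delicate point to be the case distinction itself: a Chernoff bound on $X$ is informative only once $\mu$ is of order at least $\log n$, and it is precisely this threshold that forces the $\log n$ slack in the statement. Treating small $B$ trivially and large $B$ by concentration is what makes the two terms of the bound fit together, and the independence hypothesis is what licenses the Chernoff step in the first place.
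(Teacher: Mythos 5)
Your proof is correct and follows essentially the same route as the paper's: a case distinction on whether $|B|$ exceeds a $\Theta(\log n)$ threshold, with the small case handled trivially and the large case by a lower-tail concentration bound giving $|B| = O(|A|)$ with failure probability at most $n^{-3}$. The only (immaterial) difference is the concentration tool: you apply the multiplicative Chernoff bound directly to the heterogeneous sum of indicators, whereas the paper first dominates $|A|$ from below by an auxiliary homogeneous binomial $B(|B|,0.5)$ and then applies Hoeffding's inequality.
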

%%%%%%%%%%%%%%%%%%%%%%%%%%%%%%%%%%%%%%%%%%%%%%%%%%%%%%%%%%%%%%%%%%%%%%%%%%%%%%%%%%%%%%%%%%%%%%%%%%%%%%%%%%%%%%%%
%\subsection{Proof of Lemma~\ref{lemma:half-prob-set}}
%\label{sub:proof-lemma-half-prob-set}
\begin{proof}
Let $X = |A|$ be a random variable denoting the size of $A$.
Since the individual probabilities for membership in $A$ might be different, $X$ does not necessarily follow a binomial distribution.
We define an auxiliary distribution $Y := B(|B|, 0.5)$.
Since all membership probabilities for $A$ are at least 0.5, lower tail bounds derived for $Y$ also hold for $X$.

The probability that $Y$ is less than $0.1|B|$ is then~\cite{Hoe63}:
\begin{align}
\Pr(Y < 0.1|B|) &\leq \exp\left(-2\frac{(0.5|B|-0.1|B|)^2}{|B|} \right)\\
	&= \exp\left(-2\frac{(0.4|B|)^2}{|B|}\right)\\
	&= \exp\left(-2 \cdot 0.16|B|\right)\\
	&= \exp\left(-0.32|B|\right)\\ 
\end{align}

Similar to the proof of Lemma~\ref{lemma:subtree-aggregation-cells}, we conclude with a case distinction:
\paragraph{If $|B| > 10\log n$:}
The probability $\Pr(|A| < 0.1|B|)$ is then $\Pr(|A| < 0.1|B|) \leq \Pr(Y < 0.1|B|) \leq \exp\left(-3.2\log n\right) = n^{-3.2} < 1/n^3$.
Thus $|B| \leq 10|A| \in O(|A|)$ with probability at least $1-1/n^3$.

\paragraph{If $|B| < 10\log n$:}
$|B|$ is then trivially in $O(\log n)$.
\qed
\end{proof}
%%%%%%%%%%%%%%%%%%%%%%%%%%%%%%%%%%%%%%%%%%%%%%%%%%%%%%%%%%%%%%%%%%%%%%%%%%%%%%%%%%%%%%%%%%%%%%%%%%%%%%%%%%%%%%%%

\begin{lemma}
Let $T$ be a polar hyperbolic [Euclidean] quadtree with $n$ points and $s < n$ a natural number.
Let $\Lambda$ be a circle in the hyperbolic [Euclidean] plane and let \subtreeset be the disjoint set of subtrees of $T$ that contain at most $s$ points and are cut by $\Lambda$.
Then, the subtrees in \subtreeset contain at most $24\sqrt{n\cdot s}$ points with probability at least $1-0.7^{\sqrt{n}}$ for $n$ sufficiently large.
\label{lemma:ring-root}
\end{lemma}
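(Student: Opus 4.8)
The plan is to split the subtrees of $\subtreeset$ according to the depth of their roots relative to a critical depth $i^* := \tfrac12\log_2(n/s)$, chosen so that a complete-grid cell at depth $i^*$ carries probability mass $4^{-i^*}=s/n$ and hence holds $s$ points in expectation (Lemma~\ref{lemma:node-cell-probabilities}). For roots at depth $\ge i^*$ I will bound the point count probabilistically, via the cells of depth $i^*$ that $\Lambda$ meets; for roots at depth $<i^*$ I will use the deterministic bound $|S|\le s$ together with a count of how many cells $\Lambda$ can cut. The whole argument rests on one geometric estimate: the number of depth-$i$ cells that $\Lambda$ crosses is $O(2^i)$.

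To prove that estimate, note that the boundary of a cell at depth $i$ is made up of radial geodesic segments sitting on the $2^i$ angular split-lines through the origin, and of arcs of origin-centred circles sitting on the $2^i$ radial split-levels. A circle meets any geodesic in at most two points and any other circle in at most two points, in both the Euclidean and the native hyperbolic model; consequently $\Lambda$ crosses the depth-$i$ boundaries at most $O(2^i)$ times and therefore straddles at most $O(2^i)$ cells of the complete depth-$i$ grid (up to one additional cell that may contain $\Lambda$ in its interior). I expect this to be the main obstacle: I must verify that radial lines through the origin really are geodesics and that the level sets $\{r=\text{const}\}$ really are circles, so that the ``at most two intersections'' claim is legitimate in the hyperbolic setting as well as the Euclidean one.

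With the crossing bound in hand, consider first the subtrees whose root is at depth $\ge i^*$. Each such subtree straddles $\Lambda$, hence so does its depth-$i^*$ ancestor, so every such subtree is contained in one of the depth-$i^*$ cells met by $\Lambda$. Letting $\mathcal R$ be the union of these $O(2^{i^*})=O(\sqrt{n/s})$ cells, disjointness of $\subtreeset$ gives that their total point count is at most the number of points inside $\mathcal R$. The mass of $\mathcal R$ is $O(2^{i^*}\cdot 4^{-i^*})=O(\sqrt{s/n})$, so this count is a binomial variable (the cut cells are fixed by $\Lambda$ and $J$, independently of the sampled points) with mean $\mu=O(\sqrt{ns})$. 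A Chernoff upper tail of the form $\Pr(X\ge t)\le(e\mu/t)^t$ with $t$ a large enough constant multiple of $\sqrt{ns}$ then yields a deviation probability at most $\beta^{\,\Theta(\sqrt{ns})}$ for some $\beta<1$; since $s\ge1$ this is at most $0.7^{\sqrt n}$, and it is the only source of the failure probability.

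Finally, for the subtrees whose root is at depth $<i^*$ I argue deterministically. Each such root is a distinct cell straddling $\Lambda$ at some depth $d<i^*$, and by the crossing bound there are at most $\sum_{d<i^*}O(2^d)=O(2^{i^*})=O(\sqrt{n/s})$ of them, plus at most one interior-containing cell per level, \ie $O(i^*)=O(\log n)$ extra. Multiplying by $|S|\le s$ bounds their total point count by $O(\sqrt{ns})+s\cdot i^*$, and since $s\cdot i^*=\tfrac{s}{2}\log_2(n/s)=O(\sqrt{ns})$ uniformly in $s\le n$, this contribution is $O(\sqrt{ns})$ as well. Adding the deep and shallow contributions and tracking constants gives a total of at most $24\sqrt{ns}$ with probability at least $1-0.7^{\sqrt n}$ for $n$ sufficiently large; the generous constant $24$ leaves ample slack to absorb the implicit constants from the crossing bound and the Chernoff estimate.
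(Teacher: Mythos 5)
Your proposal follows essentially the same route as the paper's own proof: the same critical depth $\approx\log_4(n/s)$, the same geometric argument that $\Lambda$ cuts $O(2^i)$ cells at depth $i$ (each angular and radial division crossed at most twice), a binomial tail bound for the points lying in the critical-depth cells met by $\Lambda$, and the deterministic bound of at most $s$ points per shallow subtree. The only bookkeeping difference is how the shallow subtrees are counted --- you sum the crossing bounds over all depths below $i^*$ (plus one interior-containing cell per level, absorbed via $s\log_2(n/s)=O(\sqrt{ns})$), whereas the paper notes that each such subtree contains a distinct cut cell at the critical depth --- and both give the same $O(\sqrt{ns})$ contribution, so your argument is correct up to the same constant-tracking looseness present in the paper itself.
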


%\subsection{Proof of Lemma~\ref{lemma:ring-root}}
%\label{sub:proof-lemma-ring-root}
%
\begin{proof}
This proof is adapted from \cite[Lemma 3]{Looz2015HRG}.
Let $k := \lfloor \log_4 n/s\rfloor$ be the minimal depth at which cells have at least $s$ points in expectation.
At most $4^k$ cells exist at depth $k$, defined by at most $2^{k}$ angular and $2^{k}$ radial divisions.
When following the circumference of the query circle $\Lambda$, each newly cut cell requires the crossing of an angular or radial division.
Each radial and angular coordinate occurs at most twice on the circle boundary, thus each division can be crossed at most twice.
With two types of divisions, $\Lambda$ crosses at most $2\cdot2\cdot 2^k = 4\cdot2^{\lfloor\log_4 n/s \rfloor}$ cells at depth $k$.
Since the value of $4\cdot2^{\lfloor\log_4 n/s \rfloor}$ is at most $4\cdot2^{\log_4 n/s}$, this yields $\leq 8\cdot \sqrt{n/s}$ cut cells.
We denote the set of cut cells with \ringset.
Since the cells in \ringset cover the circumference of the circle $\Lambda$, a subtree $S$ which is cut by $\Lambda$ is either contained within one of the cells in \ringset,
corresponds to one of the cells or contains one.
In the first two cases, all points in $S$ are within the cells of \ringset.
In the second case, at least one cell of \ringset is contained in $S$.
As the subtrees are disjoint, this cell cannot be contained in any other of the considered subtrees.
Thus, there are no more subtrees containing points not in \ringset than there are cells in \ringset, which are less than $8\cdot \sqrt{n/s}$ many.

Due to Lemma~\ref{lemma:node-cell-probabilities}, the probability that a given point is in a given cell at level $k$ is $4^{-k}$.
The number of points contained in cells of \ringset thus follows a binomial distribution $B(n,p)$.
An upper bound for the probability $p$ is given by $\frac{8\cdot \sqrt{ns}}{n}$, thus a tail bound for a slightly different distribution $B(n,\frac{8\cdot \sqrt{ns}}{n})$
also holds for $B(n,p)$.
In the proof of \cite[Lemma~7]{Looz2015HRG} a similar distribution is considered.
Setting the variable $c$ to $8\sqrt{s}$, we see that the probability of \ringset containing more than $16\cdot \sqrt{sn}$ points is smaller than $0.7^{\sqrt{n}}$.

The subtrees in \subtreeset contain at most $s$ points by definition,
thus an upper bound for the number of points in these subtrees is given by $s\cdot 8\cdot \sqrt{n/s}$ (points not in \ringset) + $16\cdot \sqrt{sn}$ (points in \ringset).
This results in at most $24\cdot \sqrt{sn}$ points contained in \subtreeset with probability at least $1-0.7^{\sqrt{n}}$.
\qed
\end{proof}

The following Lemmas~\ref{lemma:subtree-aggregation-candidates} and~\ref{lemma:subtree-aggregation-cells} bound the number of examined candidates and examined quadtree cells and are used in the proof of Theorem~\ref{lemma:subtree-aggregation-complexity}.

\begin{lemma}
Let $T$ be a quadtree with $n$ points and $(q,f)$ a query pair.
The number of candidates examined by a query using subtree aggregation is in $O(|\neighborset(q,f)| + \sqrt{n})$ whp.
\label{lemma:subtree-aggregation-candidates}
\end{lemma}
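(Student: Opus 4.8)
The plan is to bound $|\candidateset(q,f)|$ band by band using the partition $\candidateset(q,f) = \disjcup_i \candidateset(q,f,i)$, and to split each candidate according to whether it lies in the same band as the (virtual) leaf cell it was drawn from. Recall that a virtual leaf cell $c$ is created only once $|c|\cdot\overline{b} < 1$, that its bound satisfies $\overline{b} = f(\dist(c,q))$, and that a cell anchored in band $i$ has $\overline{b} \in (2^{-(i+1)}, 2^{-i}]$. Since every point of a cell is at least as far from $q$ as the cell itself, a candidate drawn from a cell anchored in band $i'$ can only fall into some band $i \ge i'$; I would call it an \emph{aligned} candidate if $i = i'$ and an \emph{overhang} candidate if $i > i'$.

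First I would dispatch the aligned candidates. If $p$ lies in band $i$ and is drawn from a cell anchored in the same band, then its confirmation probability in Line~\ref{line:candidate-confirmation} is $f(\dist(p,q))/\overline{b} > 2^{-(i+1)}/2^{-i} = 1/2$, and these confirmations are independent across candidates by Proposition~\ref{lemma:independent-correct-probabilities}. Letting $B$ be the set of all aligned candidates and $A \subseteq B$ the subset confirmed as neighbors, we have $\Pr(b\in A) > 1/2$ for every $b \in B$ and $|B| \le n$; hence a single application of Lemma~\ref{lemma:half-prob-set} gives $|B| \in O(|A| + \log n)$ whp. Since $A \subseteq \neighborset(q,f)$ and $\log n \in O(\sqrt n)$, the aligned candidates number $O(|\neighborset(q,f)| + \sqrt n)$ whp.

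The overhang candidates are where the $\sqrt n$ term really comes from, and this is the step I expect to be the main obstacle. Fix a band $i'$ and let $\Lambda_{i'}$ be the circle around $q$ on which $f = 2^{-(i'+1)}$, i.e.\ the outer boundary of band $i'$. Any virtual leaf cell anchored in band $i'$ that produces an overhang candidate contains a point from a strictly higher band, hence straddles $\Lambda_{i'}$; moreover, since $\overline{b} > 2^{-(i'+1)}$, the aggregation rule forces such a cell to contain fewer than $1/\overline{b} < 2^{i'+1}$ points. Applying Lemma~\ref{lemma:ring-root} with $\Lambda = \Lambda_{i'}$ and $s = 2^{i'+1}$ shows that all these cut subtrees together hold at most $24\sqrt{n\cdot 2^{i'+1}}$ points whp. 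Each such point becomes a candidate with probability at most $\overline{b} \le 2^{-i'}$, so the expected number of overhang candidates anchored in band $i'$ is at most $24\sqrt{2}\,\sqrt n\,2^{-i'/2}$, and summing this geometric series over $i' \ge 0$ yields $O(\sqrt n)$ overhang candidates in expectation.

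Finally I would turn the expected bound into a whp statement and assemble the pieces. Only bands containing at least one point can host an overhang cell, and the cells are disjoint, so at most $n$ distinct bands are relevant; a union bound over the corresponding $\le n$ invocations of Lemma~\ref{lemma:ring-root} keeps their total failure probability at $n\cdot 0.7^{\sqrt n} = o(1/n)$. Conditioning on these point-count bounds, the overhang candidates form a sum of independent Bernoulli trials with mean $O(\sqrt n)$, so a Chernoff bound makes the sum $O(\sqrt n)$ whp (using $\sqrt n = \omega(\log n)$). Combining the aligned bound $O(|\neighborset(q,f)| + \sqrt n)$ with the overhang bound $O(\sqrt n)$ through a last union bound gives $|\candidateset(q,f)| \in O(|\neighborset(q,f)| + \sqrt n)$ whp, as claimed.
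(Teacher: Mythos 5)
Your proposal follows the same route as the paper's proof: the same band decomposition, the same split of candidates into those lying in the band their cell is anchored in and those lying in strictly higher bands, Lemma~\ref{lemma:half-prob-set} for the former, and Lemma~\ref{lemma:ring-root} followed by an expectation computation, a Chernoff bound, and union bounds for the latter. (Your single global application of Lemma~\ref{lemma:half-prob-set} to all aligned candidates at once is, if anything, a slightly cleaner phrasing than the paper's per-band application.)

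There is, however, one genuine gap: your overhang analysis counts only candidates drawn from \emph{virtual} leaf cells. Overhang candidates also arise from \emph{original} leaf cells. An ordinary, non-aggregated leaf cell anchored in band $i'$ covers a range of distances, so it can contain points of bands $i > i'$; such a point is selected as a candidate with probability $\overline{b} \leq 2^{-i'}$ but confirmed only with probability $f(\dist(p,q))/\overline{b} > 2^{i'-i-1}$, which can be far below $1/2$ when $i \gg i'$. These candidates are therefore covered neither by your aligned-candidate argument (confirmation probability need not be at least $1/2$) nor by your ring-root application, since the bound $|S| < 1/\overline{b} < 2^{i'+1}$ comes from the aggregation rule and you never include original leaves in the family of cut subtrees fed to Lemma~\ref{lemma:ring-root}. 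The paper handles exactly this case as its third subset of $\overhangset(q,f,i)$: original leaf cells cut by the band boundary contain at most the (constant) leaf capacity many points, so Lemma~\ref{lemma:ring-root} applied with $s = \mathrm{capacity}$ bounds their total point count by $24\sqrt{n\cdot\mathrm{capacity}}$ with probability at least $1-0.7^{\sqrt{n}}$, adding a term $O\bigl(\sqrt{n}\,2^{-i'}\sqrt{\mathrm{capacity}}\bigr)$ per band to the expected candidate count, which still sums geometrically to $O(\sqrt{n})$. So the omission is repairable with precisely the tools you already invoke, but as written your proof leaves a whole class of candidates uncounted.
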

\begin{proof}
For the analysis we consider each probability band $i$ separately.
As defined above, band $i$ contains points with a neighbor probability of $2^{-(i+1)}$ to $2^{-i}$.
Among the cells anchored in band $i$, some are original leaf cells and others are virtual leaf cells created by subtree aggregation.
The virtual leaf cells contain less than one expected candidate and thus less than $2^{i+1}$ points. The capacity of the original leaf cells is constant.
All the points in cells anchored in band $i$ have a probability between $2^{-(i+1)}$ and $2^{-i}$ to be a candidate.
Among the points in virtual or original leaf cells, some are in the same band their cell is anchored in, others are in higher bands.

We divide the set of points within cells anchored in band $i$ into four subsets:
\begin{enumerate}
 \item points in band $i$ and in original leaf cells
 \item points in band $i$ and in virtual leaf cells
 \item points not in band $i$ and in original leaf cells
 \item points not in band $i$ and in virtual leaf cells
\end{enumerate}

The points in the first two sets are unproblematic.
Since the probability that a point in these sets is a neighbor is at least $2^{-(i+1)}$, the probability for a given candidate to be a neighbor is at least $\frac{1}{2}$.
Due to Lemma~\ref{lemma:half-prob-set}, the number of candidates in these sets is in $O(|\neighborset(q,f)| + \log n)$ whp, which is in $O(|\neighborset(q,f)| + \sqrt{n})$ whp.
%TODO: this is maybe not entirely trivial. B is the set of all candidates in the first two sets, A is the set of all neighbors in the first two sets. It works, but needs a bit of thinking to understand.

Points in the third set are in cells cut by the boundary between band $i$ and band $i+1$.
Since the probabilities are determined by the distance, this boundary is a circle and we can use Lemma~\ref{lemma:ring-root} to bound the number of points to $24\sqrt{n\cdot \mathrm{capacity}}$ with probability at least $1-0.7^{\sqrt{n}}$ for $n$ sufficiently large.
The mentioned capacity is the capacity of the original leaf cells.

Likewise, points in the fourth set are in virtual leaf cells cut by the boundary between bands $i$ and $i+1$.
A virtual leaf cell, which is an aggregated subtree, contains at most $2^{i+1}$ points, otherwise it would not have been aggregated.
Again, using Lemma~\ref{lemma:ring-root}, we can bound the number of points in these sets to $24\sqrt{n\cdot 2^{i+1}}$ points with probability at least $1-0.7^{\sqrt{n}}$.

We denote the union of the third and fourth sets with $\overhangset(q,f,i)$.
From the individual bounds derived in the previous paragraphs, we obtain an upper bound for the number of points in $\overhangset(q,f,i)$ of $24(\sqrt{n\cdot \mathrm{capacity}} + \sqrt{n\cdot 2^{i+1}})$ with probability at least $(1-0.7^{\sqrt{n}})^2$.
Simplifying the bound, we get that $|\overhangset(q,f,i)| \leq 24 \sqrt{n}\cdot(2^{(i+1)/2} + \sqrt{\mathrm{capacity}})$ with probability at least $1-2\cdot0.7^{\sqrt{n}}$.

Each of the points in $\overhangset(q,f,i)$ is a candidate with a probability between $2^{-i}$ and $2^{-(i+1)}$.
The candidates are sampled independently (see Step 2 of Lemma~\ref{lemma:independent-correct-probabilities}). 
While different points may have different probabilities of being a candidate and the total number of candidates does not follow a binomial distribution,
we can bound the probabilities from above with $2^{-i}$.

We proceed towards a Chernoff bound for the total number of candidates across all overhangs.
Let $X_i$ denote the random variable representing the candidates within $|\overhangset(q,f,i)|$ and let $X = \sum_{i=0}^{\infty} X_i$ denote the total number of candidates in overhangs.

The expected value $\mathbb{E}(X)$ follows from the linearity of expectations:
\begin{align}
\mathbb{E}(X) &= \sum_{i=0}^{\infty} \mathbb{E}(X_i)\\
&\leq \sum_{i=0}^{\infty} 24 \sqrt{n}\cdot(2^{(i+1)/2} + \sqrt{\mathrm{capacity}})\cdot2^{-i})\\
&= 24 \sqrt{n} \sum_{i=0}^{\infty} \sqrt{2}\cdot 2^{-i/2} + 2^{-i}\sqrt{\mathrm{capacity}}))\\
&= 24 \sqrt{n} ((2\sqrt{2}+2) + 2\sqrt{\mathrm{capacity}})
\end{align}

(Cells anchored in the band $\infty$, which has an upper bound $\overline{b}$ of zero for the neighborhood probability, do not have any candidates and can be omitted here.)

Since the candidates are sampled independently with a probability of at most $2^{-i}$, we can treat $X$ as a sum of independent Bernoulli random variables without loosing generality.
This allows us to use a multiplicative Chernoff bound~\cite{mitzenmacher2005probability} and we can now give an upper bound for the probability that the overhangs contain more than twice as many candidates as expected:
\begin{align}
\Pr(X > 2\mathbb{E}(X)) &\leq \left( \frac{e}{2^2} \right)^{\mathbb{E}(X)}\\
&= \left( \frac{e}{2^2} \right)^{24 \sqrt{n} ((2\sqrt{2}+2) + 2\sqrt{\mathrm{capacity}})}\\
&\leq \left( \frac{e}{2^2} \right)^{\sqrt{n}}\\
&\leq 0.7^{\sqrt{n}}
\end{align}

While the random variable $X = \sum_{i=0}^{\infty} X_i$ is written as an infinite sum, all but at most $n$ bands are empty, thus we are only applying the Chernoff bound over finitely many variables.
For each of the at most $n$ non-empty bands, we defined two tail bounds for the number of points in the overhang. 
Including this last bound, we thus have a chain of $2n+1$ tail bounds, each with a probability of at least $(1-0.7^{\sqrt{n}})$.
The event that any of these tail bounds is violated is a union over each event that a specific tail bound is violated.
With a union bound~\cite[Lemma 1.2]{mitzenmacher2005probability}, the probability that any of the individual tail bounds is violated is at most $(2n+1)0.7^{\sqrt{n}}$.
Since $\frac{1}{(2n+1)0.7^{\sqrt{n}}}$ grows faster than $n$ for $n$ sufficiently large,
we conclude that the total number of candidates is thus bounded by $O(|\neighborset(q,f)|) + 48\sqrt{n}((2\sqrt{2}+2) + 2\sqrt{\mathrm{capacity}})$ with probability at least $(1-1/n)$ for $n$ sufficiently large.
The leaf capacity is constant, thus the number of candidates evaluated during execution of a query $(q,f)$ is in $O(|\neighborset(q,f)| + \sqrt{n})$ whp.
\qed
\end{proof}

We proceed with an auxiliary result necessary for bounding the number of examined quadtree cells in a query:
% The proof can be found in Appendix~\ref{sub:proof-subtree-aggregation-cells}.

\begin{lemma}
Let $T$ be a quadtree with $n$ points and $(q,f)$ a query pair.
The number of quadtree cells examined by a query using subtree aggregation is in $O((|\neighborset(q,f)| + \sqrt{n}) \log n)$.
\label{lemma:subtree-aggregation-cells}
\end{lemma}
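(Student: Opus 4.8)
The plan is to split the examined cells $\cellset(q,f)$ into those touched by the top-down recursion and those touched by the calls to \texttt{maybeGetKthElement}, and to bound each group by $O((|\neighborset(q,f)|+\sqrt{n})\log n)$ whp. For \texttt{maybeGetKthElement} this is immediate: each candidate produced by the query triggers exactly one such call, which walks a single root-to-leaf path inside an aggregated subtree and hence touches $O(\log n)$ cells by Proposition~\ref{thm:quadtree-height}. Since Lemma~\ref{lemma:subtree-aggregation-candidates} gives $|\candidateset(q,f)|\in O(|\neighborset(q,f)|+\sqrt{n})$ whp, these calls examine $O((|\neighborset(q,f)|+\sqrt{n})\log n)$ cells whp.

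For the recursion, call a visited cell \emph{open} if it is an inner cell with $|c|\cdot\overline{b}\geq 1$ (so the recursion descends into its children) and \emph{frontier} otherwise (an original or virtual leaf at which the recursion stops). Apart from the root, every examined cell is a child of an open cell, and each open cell has at most four children; hence the number of recursion cells is at most $5$ times the number of open cells (the root being negligible). The open cells form a connected subtree rooted at the quadtree root whose height is $O(\log n)$ whp by Proposition~\ref{thm:quadtree-height}, so their count is at most $O(\log n)$ times the number of its leaves, the \emph{deepest} open cells, which are pairwise disjoint regions. It therefore suffices to bound the number of deepest open cells by $O(|\neighborset(q,f)|+\sqrt{n})$ whp.

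To do so I classify each deepest open cell by the band $i$ in which it is anchored. Because $|c|\geq 1/\overline{b}\geq 2^{i}$ by Lemma~\ref{lemma:node-cell-probabilities}, such a cell holds at least $2^{i}$ points, and I split these cells into those whose points lie mostly in band $i$ and those straddling the boundary circle between bands $i$ and $i+1$. For a cell of the first kind, at least half of its $\geq 2^{i}$ points are candidates with probability $>2^{-(i+1)}$, so it carries a constant expected number of candidates; since the deepest open cells are disjoint and candidate events are independent, a lower-tail bound with the same $10\log n$ case distinction as in Lemma~\ref{lemma:half-prob-set} shows the number of such cells is $O(|\candidateset(q,f)|+\log n)=O(|\neighborset(q,f)|+\sqrt{n})$ whp. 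Cells of the second kind are cut by a band boundary, so their count is controlled by the ring-counting argument of Lemma~\ref{lemma:ring-root}, which contributes the $\sqrt{n}$ term, and a union bound over the at most $n$ nonempty bands combines the estimates whp.

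The main obstacle is precisely this last counting step. A deepest open cell is guaranteed only $\geq 1$ \emph{expected} candidate, so by chance it may yield none, and the cell count is not directly dominated by $|\candidateset(q,f)|$. Converting ``one expected candidate per disjoint cell'' into ``$O(\text{candidates})$ cells whp'' is what the concentration argument above accomplishes, and carrying it out while simultaneously accounting for the band-straddling cells through Lemma~\ref{lemma:ring-root} and for the differing per-band candidate probabilities is the part that requires the most care.
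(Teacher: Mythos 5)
Your overall skeleton is the same as the paper's. Your ``deepest open cells'' (open cells all of whose children are frontier cells) are exactly the paper's category C2, i.e.\ the unaggregated cells having only (virtual) leaf cells as children; like you, the paper bounds all other examined cells by $O(\log n)$ times this count via the tree height (Proposition~\ref{thm:quadtree-height}), and then reduces the count of these bottom cells to the candidate bound of Lemma~\ref{lemma:subtree-aggregation-candidates} through a ``each such cell contains a candidate with constant probability'' concentration argument. Your treatment of the \texttt{maybeGetKthElement} calls and your first-kind cells (at least half the points in the anchoring band) is correct. The divergence, and the gap, lies in how the per-cell candidate probability is established.

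The paper avoids your case distinction entirely: for a deepest open cell $c$ anchored in band $i$, at least one child $l$ of $c$ is also anchored in band $i$ (the child realizing $\dist(c,q)$), and by Lemma~\ref{lemma:node-cell-probabilities} each of the $\geq 2^i$ points of $c$ lies in $l$ with probability $1/4$ \emph{irrespective of where the band boundary cuts $c$}; hence each point is a candidate in $l$ with probability at least $2^{-(i+3)}$, and $c$ contains a candidate with probability $>0.1$. (Incidentally, your inequality $|c|\geq 1/\overline{b}\geq 2^i$ follows from the aggregation criterion, not from Lemma~\ref{lemma:node-cell-probabilities}.) By conditioning on the point positions you create the band-straddling case, and the step you use to close it does not follow as cited: Lemma~\ref{lemma:ring-root} bounds the number of \emph{points} lying in disjoint subtrees that contain \emph{at most} $s$ points and are cut by a circle; it says nothing about the number of disjoint \emph{cells} that each contain \emph{at least} $2^i$ points and are cut by the band boundary, which is what you need. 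That count can be recovered, but only by reopening the internal argument of Lemma~\ref{lemma:ring-root}: the boundary circle crosses at most $8\sqrt{n/2^i}$ cells at depth $\lfloor\log_4(n/2^i)\rfloor$; each band-straddling cell either contains or equals one of these crossed cells (at most $8\sqrt{n/2^i}$ of them, by disjointness), or lies inside one, in which case its $\geq 2^i$ points are among the $O(\sqrt{n\cdot 2^i})$ points of the crossed ring, again giving $O(\sqrt{n/2^i})$ cells whp. This per-band decay $\sqrt{n/2^i}$ is essential: only then is the sum over bands geometric and totals $O(\sqrt{n})$. As written, your argument asserts a $\sqrt{n}$ contribution from the ring argument and a union bound over up to $n$ bands, which by itself would not yield $O(\sqrt{n})$ band-straddling cells in total. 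So the architecture is sound and matches the paper, but the second-kind step needs the adapted counting above -- or, more simply, the paper's trick of exploiting the randomness of the point positions, which makes the straddling case disappear altogether.
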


%%%%%%%%%%%%%%%%%%%%%%%%%%%%%%%%%%%%%%%%%%%%%%%%%%%%%%%%%%%%%%%%%%%%%%%%%%%%%%%%%%%%%%%%%%%%%%%%%%%%%%%%%%%%%%%%
% \subsection{Proof of Lemma~\ref{lemma:subtree-aggregation-cells}}
% \label{sub:proof-subtree-aggregation-cells}
%
To prove Lemma~\ref{lemma:subtree-aggregation-cells}, we first introduce another auxiliary lemma:
\begin{lemma}
Let $\mathbb{D}_R$ be a hyperbolic or Euclidean disk of radius $R$ and let $T$ be a polar quadtree on $\mathbb{D}_R$ 
containing $n$ points distributed according to Section~\ref{sub:notation}.
Let \uaqcset(q,f) be the set of unaggregated quadtree cells that have only (virtual) leaf cells as children (category~C2 in the proof of Lemma~\ref{lemma:subtree-aggregation-cells}).
 With a query using subtree aggregation, $|\uaqcset(q,f)|$ is in $O(|\neighborset(q,f)| + \sqrt{n})$ whp.
 \label{lemma:bound-unaggregated-quadtree-cells-with-leaf-children}
\end{lemma}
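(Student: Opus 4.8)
The final statement to prove is Lemma~\ref{lemma:bound-unaggregated-quadtree-cells-with-leaf-children}: bounding the number of unaggregated quadtree cells whose children are all (virtual) leaves by $O(|\neighborset(q,f)| + \sqrt{n})$ whp. Let me think about what these cells are and where they sit.

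A cell $c$ in the category $\uaqcset(q,f)$ is an internal node that was *not* aggregated — so $|c|\cdot\overline{b} \geq 1$ at $c$ — but all of whose children *were* either aggregated into virtual leaves or are original leaves. The key structural fact is that each such $c$ has expected at least one candidate (since $|c|\cdot\overline{b}\geq 1$), which suggests I should relate $|\uaqcset|$ to the candidate count or to the neighbor count $|\neighborset|$ via probability bands, exactly as was done in Lemma~\ref{lemma:subtree-aggregation-candidates}.

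**The band decomposition.**

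My plan is to reuse the band machinery. Classify each cell $c \in \uaqcset(q,f)$ by the band $i$ in which it is anchored, i.e. $c \in \cellset(q,f,i)$ with $2^{-(i+1)} < f(\dist(c,q)) \leq 2^{-i}$. For a cell anchored in band $i$, the upper bound is $\overline{b} \in (2^{-(i+1)}, 2^{-i}]$, and the non-aggregation condition $|c|\cdot\overline{b}\geq 1$ forces $|c| \geq 2^i$. So every cell in $\uaqcset(q,f,i)$ contains at least $2^i$ points. Now I want to argue that the total number of such cells is small.

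The natural dichotomy within each band $i$ mirrors the proof of Lemma~\ref{lemma:subtree-aggregation-candidates}: a cell in $\uaqcset(q,f,i)$ is either (a) fully contained in band $i$ — every point in it has neighbor probability at least $2^{-(i+1)}$, so its expected number of neighbors is $\geq 2^i \cdot 2^{-(i+1)} = \tfrac12$ — or (b) it straddles the boundary between band $i$ and band $i+1$ (that boundary being a circle, since $f$ depends only on distance). For type (a) cells, I can charge neighbors: since each contains $\geq 2^i$ points each of neighbor-probability $\geq 2^{-(i+1)}$, and cells are disjoint, a Chernoff/Lemma~\ref{lemma:half-prob-set}-style argument bounds the number of type-(a) cells across all bands by $O(|\neighborset(q,f)| + \log n)$ whp — because together these disjoint cells hold $\geq \tfrac12$-expected-neighbor mass, so having many of them forces many neighbors whp. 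For type (b) cells, they are cut by the band-boundary circle, so Lemma~\ref{lemma:ring-root} applies directly: a disjoint family of subtrees each holding at most (the capacity needed — here I take $s = 2^{i+1}$, the virtual-leaf threshold one level down, or the relevant per-band size) points cut by a circle numbers at most the $O(\sqrt{n/s})$ cut cells, hence summing the counts over bands telescopes to $O(\sqrt{n})$ whp via the same geometric-series-and-union-bound argument used in Lemma~\ref{lemma:subtree-aggregation-candidates}.

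**The main obstacle and how I'd close it.**

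The delicate part is the type-(a) charging: I must ensure the disjoint cells genuinely contribute *distinct* expected neighbor mass and that I may invoke Lemma~\ref{lemma:half-prob-set} or a Chernoff bound on an honest sum of independent Bernoullis. Since the cells in $\uaqcset$ are internal nodes, they are pairwise non-nested only after restricting to a single generation, but cells anchored in the same band at possibly different depths could nest; I would resolve this by noting that if $c \in \uaqcset(q,f)$ then its children are all leaves, so two distinct members of $\uaqcset$ cannot be ancestor/descendant of one another — their point sets are disjoint, which is exactly what the independence and disjointness requirements of Lemma~\ref{lemma:half-prob-set} need. With disjointness secured, I assemble the pieces: type-(a) cells number $O(|\neighborset(q,f)| + \log n)$ whp, type-(b) cells number $O(\sqrt{n})$ whp, and a union bound over the at-most-$n$ nonempty bands (each tail bound of probability $\geq 1 - 0.7^{\sqrt n}$, together with the overall Chernoff step) preserves the whp guarantee for $n$ sufficiently large. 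Adding the two contributions gives $|\uaqcset(q,f)| \in O(|\neighborset(q,f)| + \sqrt{n})$ whp, as claimed.
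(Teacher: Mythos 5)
Your route is genuinely different from the paper's, and about half of it is sound. The paper performs no band dichotomy over $\Upsilon(q,f)$ at all: it notes that a cell $c$ anchored in band $i$ has $|c|\geq 2^i$ points and at least one child $l$ also anchored in band $i$; each point of $c$ lands in $l$ with probability $1/4$ (Lemma~\ref{lemma:node-cell-probabilities}), and \emph{every} point of $l$ -- regardless of which band the point itself lies in -- becomes a candidate with probability $f(\dist(q,l)) > 2^{-(i+1)}$. Hence each cell of $\Upsilon(q,f)$ contains a candidate with constant probability $> 1-e^{-1/8} > 0.1$; a tail bound (plus the case distinction on whether $|\Upsilon(q,f)|$ is $\omega(\sqrt{n})$) gives $|\Upsilon(q,f)| = O(|\candidateset(q,f)|)$ whp, and Lemma~\ref{lemma:subtree-aggregation-candidates} is invoked as a black box. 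Charging cells to \emph{candidates}, whose probability is uniform over a leaf and bounded below by the leaf's anchor band, is exactly what lets the paper avoid your type-(a)/type-(b) split; charging to \emph{neighbors}, as you do, forces you to treat straddling cells separately because points beyond band $i$ can have arbitrarily small neighbor probability. Your type-(a) half works: the ancestor/descendant disjointness argument is correct, $|c|\geq 2^i$ holds, and each type-(a) cell contains a neighbor with probability at least $1-e^{-1/2}\approx 0.39$. (Note this is below the $0.5$ threshold that Lemma~\ref{lemma:half-prob-set} requires, so you cannot cite it verbatim; you need a direct Chernoff-type bound, as the paper itself does with its $0.1$ constant -- minor, fixable.)

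The genuine gap is the type-(b) step. Lemma~\ref{lemma:ring-root} bounds the number of \emph{points} in disjoint subtrees each containing \emph{at most} $s$ points that are cut by a circle; it bounds neither the \emph{number} of such subtrees, nor anything about cells with a size \emph{lower} bound -- and your cells satisfy the opposite condition: non-aggregation forces $|c|\geq 2^i$, with no upper bound in terms of $i$ (a child can be a virtual leaf anchored in a much higher band $j$ and hold up to $2^{j+1}$ points). Your recalled form of the lemma, that a disjoint family of cut subtrees with at most $s$ points each ``numbers at most $O(\sqrt{n/s})$,'' is not what the lemma says and is false in general: arbitrarily many small disjoint cells can line the circle. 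Nor does a black-box repair succeed: splitting your cells into those with at most $s'$ points (apply the lemma, divide by $2^i$) and those with more (at most $n/s'$ many) optimizes at $s' \approx n^{1/3}2^{2i/3}$ and yields only $O(n^{2/3})$ total, too weak for the claimed $O(\sqrt{n})$. What your plan actually needs is a cell-counting variant: disjoint cells, each cut by the band-$i$ boundary circle, each holding $\geq 2^i$ points, number $O(\sqrt{n/2^i})$ whp. This is true, but only by reopening the \emph{proof} of Lemma~\ref{lemma:ring-root}: the circle cuts at most $8\sqrt{n/2^i}$ cells at depth $k=\lfloor \log_4(n/2^i)\rfloor$; each of your cells of depth $\leq k$ contains a distinct such cut cell, while each of depth $> k$ lies inside one, and since the cut depth-$k$ cells hold at most $16\sqrt{n\cdot 2^i}$ points whp, at most $16\sqrt{n/2^i}$ disjoint cells of size $\geq 2^i$ fit inside them. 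Only then does your geometric sum over bands give $O(\sqrt{n})$. So your approach is completable, but this counting lemma is a missing piece that must be stated and proved, not cited.
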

\begin{proof}
 Let $c\in \uaqcset(q,f,i)$ be such an unaggregated quadtree cell anchored in band $i$ that has only original or virtual leaf cells as children.
 It contains at least $2^i$ points and has four children, of which at least one is also anchored in band $i$.
 We denote this (virtual) leaf anchored in band $i$ with $l$.
 Since each child of $c$ contains the same probability mass (Lemma~\ref{lemma:node-cell-probabilities}), each point of $c$ is in $l$ with probability $1/4$:
\begin{equation}
 \Pr(p \in l | p \in c) = \frac{1}{4}.
\end{equation}

A point in $l$ is a candidate (in $l$) with probability $f(\dist(q,l))$, which is between $2^{-(i+1)}$ and $2^{-i}$ since $l$ is anchored in band $i$.
The probability that a given point $p\in c$ is a candidate in $l$ is then 
\begin{equation}
 \Pr(p \in l \wedge p \in \candidateset(q,f,i)\text{ | }p \in c) = \frac{1}{4}\cdot f(\dist(q,l)) \geq 2^{-(i+3)}
\end{equation}

Since the point positions and memberships in $\candidateset(q,f,i)$ are independent, we can bound the number of candidates in $l$ with a binomial distribution $B(|c|, 2^{-(i+3)})$.
The probability that $l$ contains no candidates is:
\begin{align}
 f(0, |c|, \frac{1}{8} \cdot 2^{-i}) &= (1-\frac{1}{8} \cdot 2^{-i})^{|c|}\\
 &\leq (1-\frac{1}{8} \cdot \frac{1}{2^i})^{2^i} 
\end{align}

Considered as a function of $i$, this probability is monotonically ascending.
In the limit of $2^i \rightarrow \infty$, it trends to $\exp(-1/8) \approx 0.88$, a value it never exceeds.
The probability that the cell $c$ contains at least one candidate is then above $1-\frac{1}{\sqrt[8]{e}} > 0.1$.

For each cell in \uaqcset, the probability that it contains at least one candidate is $> 0.1$.
Let $X$ be the random variable denoting the number of cells in \uaqcset that contain at least one candidate.
We define an auxiliary binomial distribution $B(|\uaqcset|, 0.1)$ and use a tail bound to estimate the number of cells in \uaqcset containing candidates.
Let $Y\propto B(|\uaqcset|, 0.1)$ be a random variable distributed according to this auxiliary distribution.

We use a tail bound from \cite{ArratiaGordon1989} to limit the probability that $Y < 0.05|\uaqcset|$ to at most $\exp(-|\uaqcset|/80)$.
Since $0.1$ was a lower bound for the probability that a cell contains a candidate, this tail bound also holds for $X$.
The probability that the set of \uaqcset contains at least $0.05|\uaqcset|$ many candidates is then at least $(1-\exp(-|\uaqcset|/80))$.

We continue with a case distinction:
\paragraph{If $|\uaqcset| \in \omega(\sqrt{n})$:}
The probability $(1-\exp(-|\uaqcset|/80))$ is then smaller than $(1-\exp(-\sqrt{n}/80))$, which is  $< 1/n$ for sufficiently large $n$.
Thus the number of examined quadtree cells during a query is then linear in the number of candidates.
Due to Lemma~\ref{lemma:subtree-aggregation-candidates}, this is in $O(|\neighborset(q,f)| + \sqrt{n})$.

\paragraph{If $|\uaqcset| \in o(\sqrt{n})$:} 
The cardinality $|\uaqcset|$ is trivially in $O(\sqrt{n})$.
\qed
\end{proof}

The proof of Lemma~\ref{lemma:subtree-aggregation-cells} then follows easily:
\begin{proof}
We split the set of examined quadtree cells into three categories:
\begin{itemize}
 \item leaf cells and root nodes of aggregated subtrees (C1)
 \item parents of cells in the first category (C2)
 \item all other (C3)
\end{itemize}
The third category (C3) then exclusively consists of inner nodes in the quadtree. When following a chain of nodes in category C3 from the root downwards, it ends with a node in category C2.
The size $|C3|$ is thus at most $O(|C2| \log n)$ whp, since the number of elements in a chain cannot exceed the height of the quadtree, which is $O(\log n)$ by Proposition~\ref{thm:quadtree-height}.

With a branching factor of 4, $|C1| = 4|C2|$ holds.

The number of cells in category C2 can be bounded using Lemma~\ref{lemma:bound-unaggregated-quadtree-cells-with-leaf-children} to $O(|\neighborset| + \sqrt{n})$ with high probability.
The total number of examined cells is thus in $O((|\neighborset| + \sqrt{n}) \log n)$.
\qed
\end{proof}

\pagebreak

\section{Visualizations of Experimental Results}
\begin{figure}[ht!]
\centering
\begin{tabular}{lr}
\includegraphics[width=0.55\linewidth]{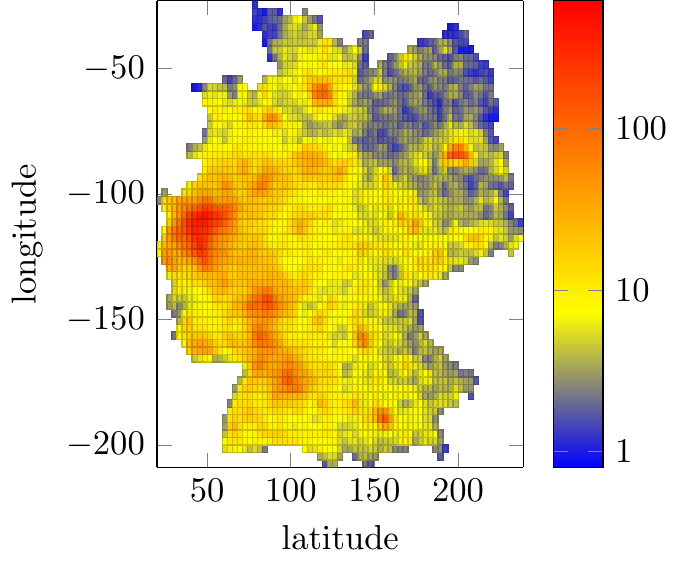}
\hspace{0.1cm}
&
\hspace{0.1cm}
\includegraphics[width=0.45\linewidth]{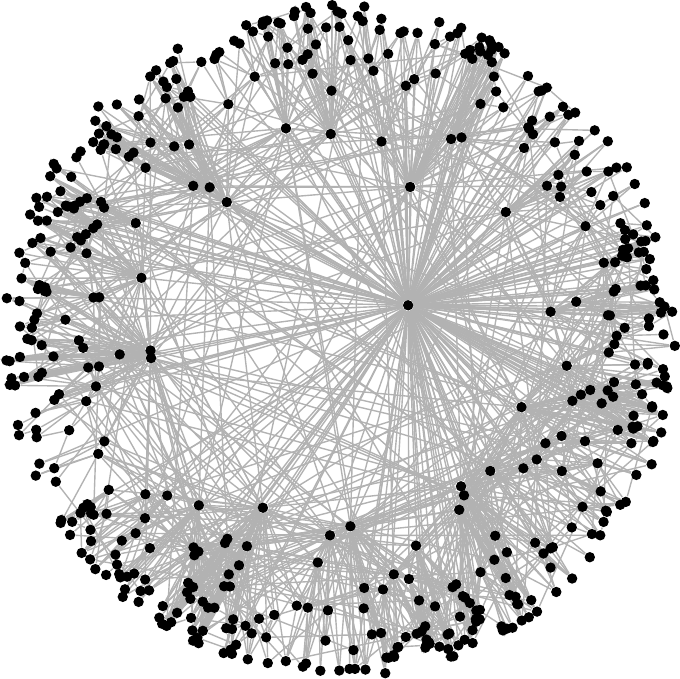}
\end{tabular}
 \caption{Left: Twenty-third time step of a simulated disease progression through Germany. 
The colors indicate the number of infected persons within a cell. 
Right: Random hyperbolic graph with 500 nodes and average degree 12.
}
 \label{plot:heatMap-and-hyperbolic-disk}
\end{figure}

\pagebreak

%%%%%%%%%%%%%%%%%%%%%%%%%%%%%%%%%%%%%%%%%%%
\section{Performance of Baseline Algorithm~\ref{algo:quadnode-probabilistic}}
\label{sec:baseline-performance}
\renewcommand{\aconst}{190}
\renewcommand{\bconst}{275}
\renewcommand{\cconst}{13}
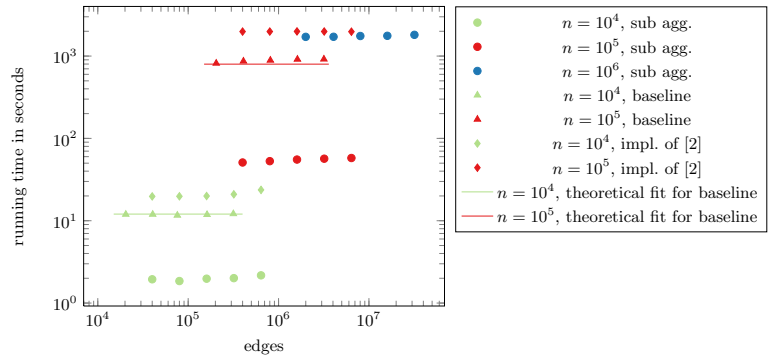
\begin{figure}[h!]
\centering
\begin{tikzpicture}[scale=0.7]
 \begin{axis}[xmode=log,ymode=log,xlabel=edges,ylabel=running time in seconds,legend entries={}, legend pos=outer north east]
  \addplot[scatter,only marks,
	   point meta = explicit symbolic,
	   scatter/classes={
	    f={draw=plottinggreen,fill=plottinggreen },%
	    g={draw=thirdhue,fill=thirdhue },%
	    h={draw=markedcolor,fill=markedcolor},
% 	    d={draw=orange,fill=orange},
	    i={mark=triangle*, draw=plottinggreen, fill=plottinggreen},
	    j={mark=triangle*, draw, thirdhue, fill=thirdhue},
	    a={mark=diamond*, draw=plottinggreen, fill=plottinggreen},
	    b={mark=diamond*, draw, thirdhue, fill=thirdhue}}
	    ]
	    table[x=m, y=generationTime, meta=label]
          {plots/comparison-time-baseline.dat};
  \addlegendentry{$n=10^4$, sub agg.};
  \addlegendentry{$n=10^5$, sub agg.};
  \addlegendentry{$n=10^6$, sub agg.};
  \addlegendentry{$n=10^4$, baseline};
  \addlegendentry{$n=10^5$, baseline};
  \addlegendentry{$n=10^4$, impl. of \cite{Aldecoa2015}};
  \addlegendentry{$n=10^5$, impl. of \cite{Aldecoa2015}};
  \addplot[plottinggreen] expression[domain=15000:400000] {7.94*10^(-8)*10^8 + 4.1*10^(-4)*10^4};\addlegendentry{$n=10^4$, theoretical fit for baseline};
  \addplot[thirdhue] expression[domain=150000:3600000] {7.94*10^(-8)*10^10 + 4.1*10^(-4)*10^4};\addlegendentry{$n=10^5$, theoretical fit for baseline};
 \end{axis}
\end{tikzpicture}
 \caption{Comparison of running times to generate networks with $10^4$ to $10^6$ vertices. Generating a graph requires $n$ queries.
 Shown are running times of the baseline algorithm, queries using subtree aggregation and the implementation of \cite{Aldecoa2015}.
 The theoretical fit is given by the equation $T(n,m) = \left(7.94\cdot 10^{-8} n^2 + 4.1\cdot 10^{-4} n\right)$ seconds.
 The baseline algorithm is still faster than the previous implementation~\cite{Aldecoa2015}, but much slower than the improved query using subtree aggregation.
 }
 \label{plot:time-baseline-scatter}
\end{figure}

% \pagebreak
% 
% 
% 
% \newpage

\pagebreak

%%%%%%%%%%%%%%%%%%%%%%%%%%%%%%%%%%%%%%%%%%
\section{Fast RHG Generator vs Reference Implementation~\cite{Aldecoa2015}}
\label{sec:comparison-previous-impl}
\begin{figure}[!h]
\centering
\begin{tabular}{cc}
\includegraphics[width=0.41\linewidth]{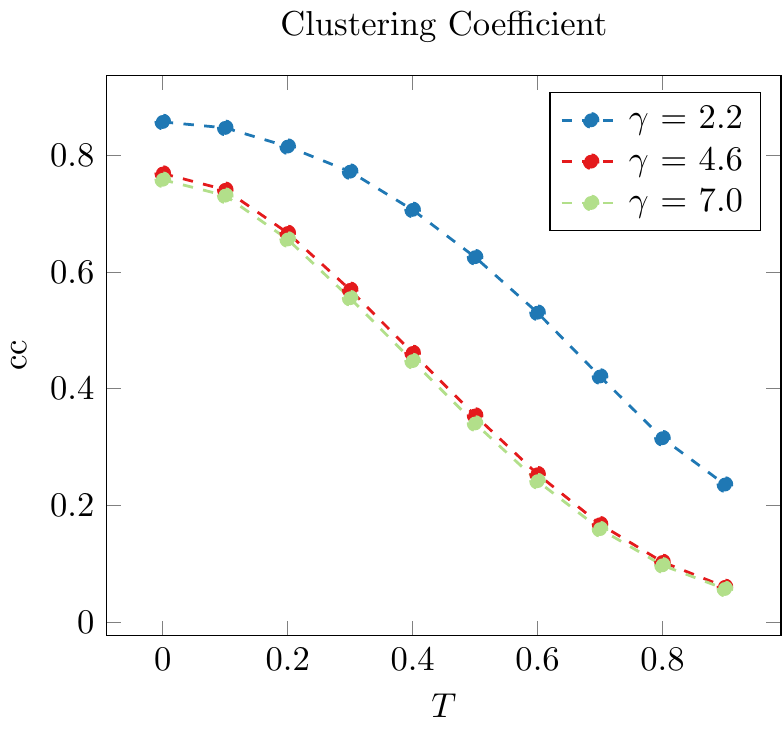} & \includegraphics[width=0.41\linewidth]{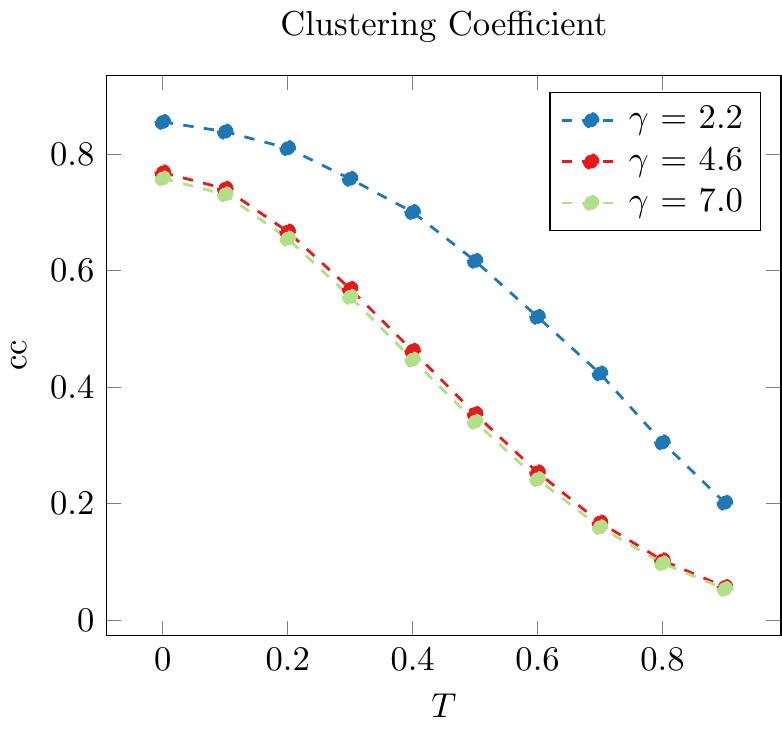}\\
\includegraphics[width=0.41\linewidth]{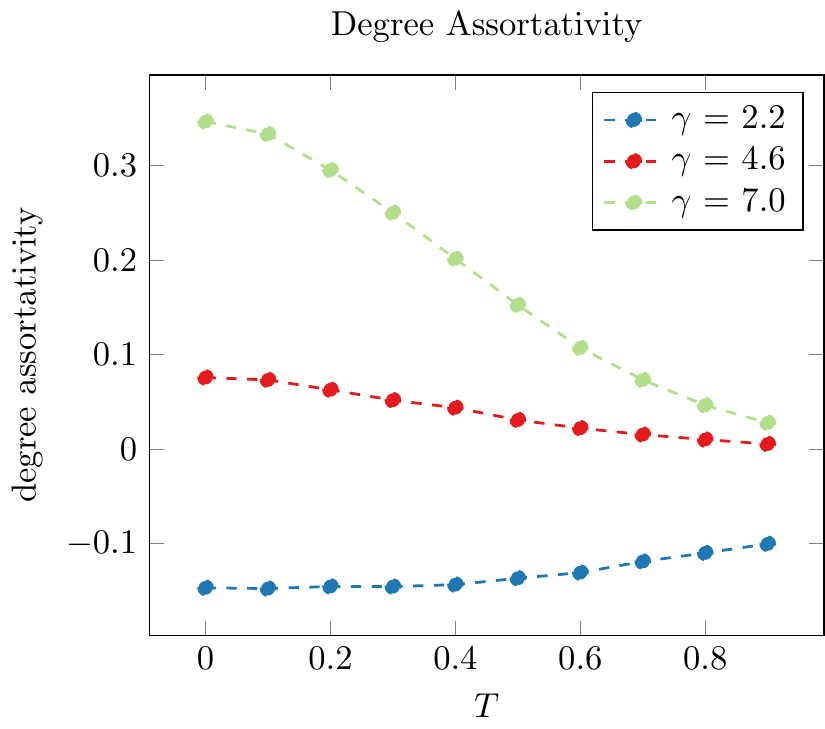} & \includegraphics[width=0.41\linewidth]{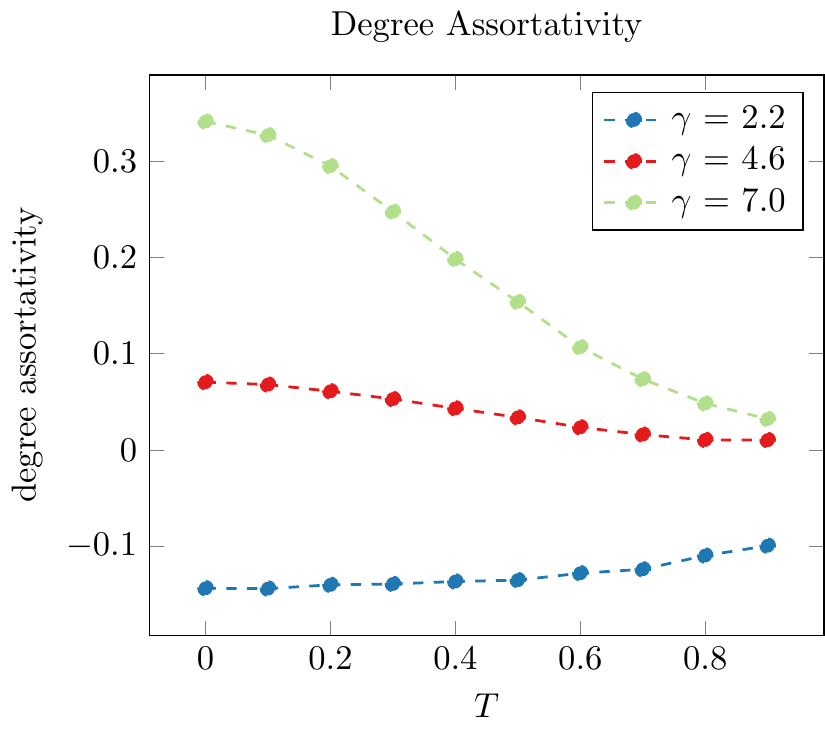}\\
\includegraphics[width=0.41\linewidth]{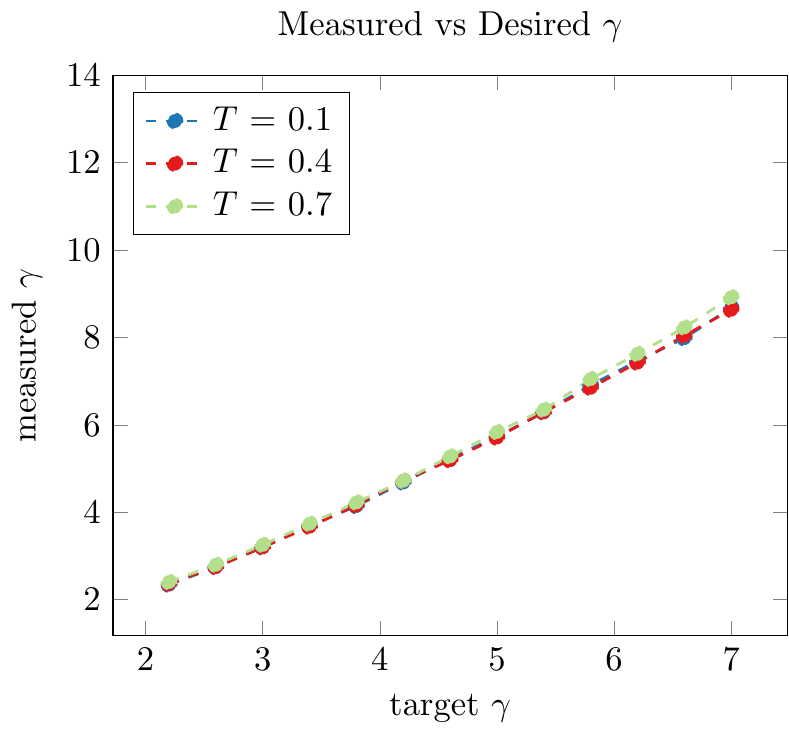} & \includegraphics[width=0.41\linewidth]{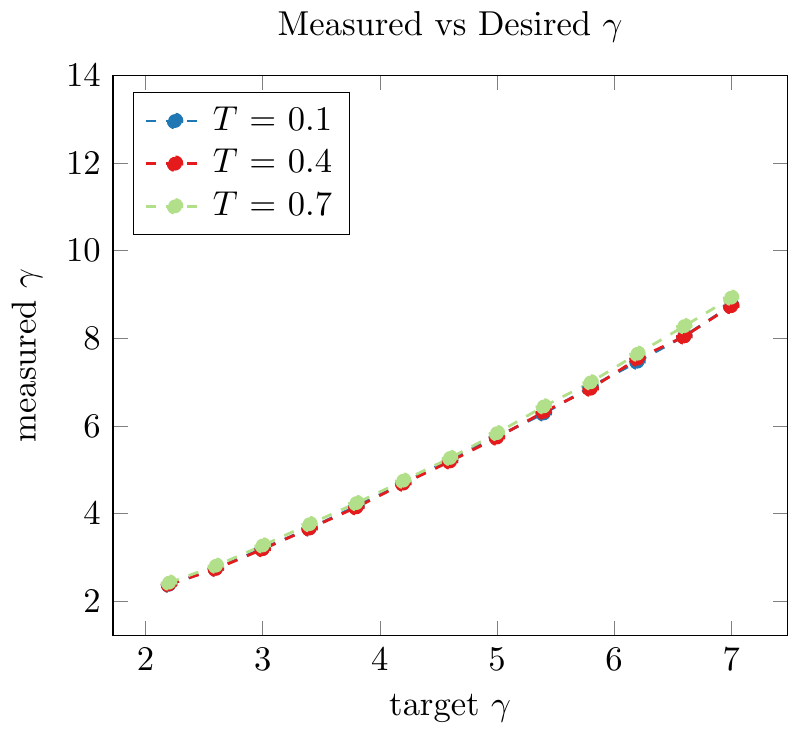}\\
\end{tabular} 
\caption{Comparison of clustering coefficients, degree assortativity and measured vs desired power-law exponent $\gamma$.
Shown are the implementation of~\cite{Aldecoa2015} (left) and our implementation (right).
The clustering coefficient describes the ratio of closed triangles to triads in a graph.
Degree assortativity describes whether vertices have neighbors of similar degree.
%A value near 1 signifies subgraphs with equal degree, a value of -1 star-like structures.
The degree distribution of random hyperbolic graphs follows a power law, whose exponent $\gamma$ can be adjusted.
In the degree distribution plot, the blue curve is almost always identical to the red curve and thus covered by it.
Values are averaged over 80 runs.}
\label{plot:properties-comparison-I}
\end{figure}
\end{document}